\documentclass[12pt,a4paper]{amsart} 
\usepackage{amsmath,amsfonts,amssymb,amsthm,amscd,slashed}
\usepackage{epsfig}
\pdfoutput=1

\usepackage{tikz}

\usepackage{xypic}

\usepackage{hyperref}

\newtheorem{thm}{Theorem}
\newtheorem{corl}[thm]{Corollary}
 \newtheorem{prop}[thm]{Proposition}
\newtheorem{defn}[thm]{Definition}

\newtheorem{ex}[thm]{Example}
\newtheorem{rem}[thm]{Remark}


\setcounter{tocdepth}{2}
\def\1{\mathbb{I}}

\def\A{\mathcal{A}}

\DeclareMathOperator{\Ad}{Ad}

\DeclareMathOperator{\Aut}{Aut}
\def\B{\mathfrak{B}}
\def\fB{\mathfrak{B}}

\def\C{\mathbb{C}}

\def\cO{\mathcal{O}}

\def\dd{\mathrm{d}}

\DeclareMathOperator{\Der}{Der}
\def\Diff{{\textup{Diff}}}
\def\dirac{D_M}
\def\dR{{\scriptscriptstyle \mathrm{dR}}}

\def\ev{\mathrm{ev}}

\def\G{\mathfrak{G}}

\def\cG{\mathfrak{G}}
\def\g{\mathfrak{g}}

\def\H{\mathcal{H}}

\DeclareMathOperator{\Inn}{Inn}
\def\into{\hookrightarrow}

\def\nn{\nonumber}

\def\op{\textup{op}}

\DeclareMathOperator{\Out}{Out}

\def\pert{{\rm Pert}}

\def\R{\mathbb{R}}

\def\S{\mathcal{S}}

\def\bS{\mathbb{S}}

\def\su{\mathfrak{su}}

\def\bT{\mathbb{T}}

\def\tilde{\widetilde}

\def\U{\mathcal{U}}
\def\u{\mathfrak{u}}

\def\Z{\mathbb{Z}}

\title[Localizing gauge theories from NCG]{Localizing gauge theories from noncommutative geometry}
\author{Walter D. van Suijlekom}
\address{\flushleft Institute for Mathematics, Astrophysics and Particle Physics, 
Faculty of Science, Radboud University Nijmegen, Heyendaalseweg 135, 6525 AJ Nijmegen, The Netherlands}
\email{waltervs@math.ru.nl}
\date{November 24, 2014}

\begin{document}

\maketitle

\begin{abstract}
We recall the emergence of a generalized gauge theory from a noncommutative Riemannian spin manifold, {\it viz.}~a real spectral triple $(\A,\H,D;J)$. This includes a gauge group determined by the unitaries in the $*$-algebra $\A$ and gauge fields arising from a so-called perturbation semigroup which is associated to $\A$. Our main new result is the interpretation of this generalized gauge theory in terms of an upper semi-continuous $C^*$-bundle on a (Hausdorff) base space $X$. The gauge group acts by vertical automorphisms on this $C^*$-bundle and can (under some mild conditions) be identified with the space of continuous sections of a group bundle on $X$. This then allows for a geometrical description of the group of inner automorphisms of $\A$. 

We exemplify our construction by Yang--Mills theory and toric noncommutative manifolds and show that they actually give rise to continuous $C^*$-bundles. Moreover, in these examples the corresponding inner automorphism groups can be realized as spaces of sections of group bundles that we explicitly determine. 

\end{abstract}

\tableofcontents

\section{Introduction}
There is a natural link between noncommutative geometry and non-abelian gauge theories. This is mainly due to the fact that any noncommutative (involutive) algebra $\A$ gives rise to a non-abelian group of invertible (unitary) elements in $\A$. This has given rise to many applications in physics, such as to Yang--Mills theories \cite{CC96,CC97} and to the derivation of the Standard Model of elementary particles from a noncommutative Riemannian spin manifold \cite{CCM07}. 

Even though these examples deal with gauge theories on {\it commutative} background spaces, the gauge group and the gauge fields are defined along the general lines of \cite{C96} ({\it cf.} the more recent \cite{CCS13} and Section \ref{sect:st} below), valid for any noncommutative Riemannian spin manifold. For instance, the gauge group is given in terms of the unitary elements $\U(\A)$ in an involutive algebra $\A$, independent of a classical background space. However, in the physical applications of \cite{CC96,CC97,CCM07,CM07,CCS13b} ---including extensions of them to the topologically non-trivial case \cite{BS10,Cac11,BD13}--- the elements in $\U(\A)$ yield automorphisms of a principal bundle, in perfect agreement with the usual description of gauge theories. 

In the present paper we will show that a gauge theory derived from a noncommutative Riemannian spin manifold can always be described by means of bundles on a commutative background space. More precisely, starting with a so-called {\it real spectral triple} for the $C^*$-algebra $A$, we identify a subalgebra $A_J$ in the center of $A$ which by Gelfand duality is isomorphic to $C(X)$ for some compact Hausdorff topological space $X$. This turns $A$ into a so-called $C(X)$-algebra \cite{Kas88} for which it is well-known that it can be identified with the $C^*$-algebra of continuous sections of a bundle $\fB$ of $C^*$-algebras on $X$ (in general, this is an upper semi-continuous $C^*$-bundle, see references for Theorem \ref{thm:usc} below). This bundle $\fB$ will set the stage for the generalized gauge theory. We will show that the gauge group $\U(A) /\U(A_J)$ derived from the real spectral triple acts by vertical bundle automorphisms on this bundle, which agrees with the action of it on $A$ by inner automorphisms (Proposition \ref{prop:gauge}). Moreover, under some additional conditions, we identify (Theorem \ref{thm:lift}) a group bundle\footnote{In this paper we use the term {\it group bundle} for a continuous, open, surjection $\pi: \G\to X$ between topological spaces such that each fiber $\pi^{-1}(x)$ is a topological group. }  whose space of continuous sections coincides with the gauge group. The gauge fields can be considered as sections of a bundle $\fB_\Omega$ constructed in much the same way as $\fB$, also carrying an action of the gauge group which agrees with the usual gauge transformation for gauge fields (Theorems \ref{thm:one-forms} and \ref{thm:gauge-field}). 

Besides the applications to Yang--Mills theory that we recall in Section \ref{subsect:ym}, we consider the interesting class of toric noncommutative manifolds in Section \ref{subsect:toric}. They are obtained by deformation quantization of a Riemannian spin manifold $M$ along a torus action. We identify the base space of our $C^*$-bundle with the corresponding orbit space, and characterize the fiber $C^*$-algebra. We show that the $C^*$-bundle is always continuous, as opposed to merely upper semi-continuous (Theorem \ref{thm:toric-bundle}). Moreover, if the orbit space is simply connected, then the gauge group is isomorphic to the space of continuous sections of a group bundle on that orbit space (Proposition \ref{prop:gauge-theta}), which in turn is isomorphic to the group of inner automorphisms (Corollary \ref{corl:inner-theta}). We end by a concrete study of two examples: the toric noncommutative spheres $\bS^3_\theta$ and $\bS^4_\theta$.

\subsection*{Acknowledgements}
This research was partially supported by NWO under VIDI-grant 016.133.326.
I would also like to thank the Hausdorff Institute for Mathematics in Bonn for hospitality and support during the Trimester Program ``Noncommutative Geometry and its Applications'' in the Fall of 2014. I thank Simon Brain, Branimir \'Ca\'ci\'c, Alan Carey, Bram Mesland and Adam Rennie for fruitful discussions, and Jord Boeijink for discussions and a careful proofreading.

\section{Spectral triples}
\label{sect:st}
The basic device in noncommutative Riemannian spin geometry is a spectral triple (see \cite[Section IV.2.$\delta$]{C94}  where they were called unbounded $K$-cycles). Let us start by recalling its definition, including the notion of a real structure \cite{C95}.

\begin{defn}
\label{defn:st}
A {\em spectral triple} $(\A,\H,D)$ is given by a unital $*$-algebra $\A$
faithfully represented as bounded operators on a Hilbert space $\H$ and a self-adjoint
operator $D$ in $\H$ such that the resolvent $(i+D)^{-1}$ is a compact operator and $[D,a]$ is bounded for each $a\in \A$.

\medskip


A {\em real structure} 
on a spectral triple 
is an anti-linear isometry $J: \H \to \H$ such that
\begin{equation*}
J^2 = \varepsilon, \qquad JD = \varepsilon' DJ,
\end{equation*}
where the numbers $\varepsilon ,\varepsilon' 
\in \{ -1,1\}$.

Moreover, with $b^0 = J b^* J^{-1}$, we impose the {\em commutant property} and the {\em order one condition}:
\begin{align}
[a,b^0] = 0, \qquad [[D,a],b^0] = 0 ; \qquad (a,b \in \A).
\label{comm-1ord}
\end{align}
A spectral triple with a real structure is called a {\em real spectral triple}.
\end{defn}

\begin{ex}
\label{ex:can-st}
The basic example of a spectral triple is the {\bf canonical spectral triple} associated to a compact Riemannian spin manifold:
\begin{itemize}
\item $\A=C^\infty(M)$, the algebra of smooth functions on $M$;
\item $\H=L^2(S)$, the Hilbert space of square integrable sections of a spinor bundle $S \to M$;
\item $D=\dirac$, the Dirac operator associated to the Levi--Civita connection lifted to the spinor bundle.
\end{itemize}
A real structure $J$ is given by charge conjugation $J_M$.
\end{ex}

\begin{ex}
\label{ex:ym}
More generally, let $B \to M$ be a locally trivial $*$-algebra bundle on $M$ with typical fiber $M_N(\C)$, the $*$-algebra of $N \times N$ matrices with complex entries. On the Hilbert space $L^2(B \otimes S)$ one then considers the Dirac operator $D^{B}_M$ on $B \otimes S$ defined in terms of a hermitian $*$-algebra connection on $B$ and the spin connection on $S$ (cf. \cite{BoeS10} or \cite[Chapter 10]{Sui14} for more details). A real structure is given on $L^2(B \otimes S)$ by $J(s \otimes \psi) = s^* \otimes J_M \psi$ making the following set of data a real spectral triple:
$$
\left(\Gamma^\infty(B), L^2(B \otimes S), D^B_M; (\cdot)^\ast \otimes J_M \right).
$$
\end{ex}

\begin{defn}
\label{defn:unitary-equiv}
We say that two real spectral triples $(\A_1,\H_1,D_1;J_1)$ and $(\A_2,\H_2,D_2;J_2)$ are {\em unitarily equivalent} if $\A_1 = \A_2$ and if there exists a unitary operator $U: \H_1 \to \H_2$ such that
\begin{align*}
U \pi_1(a) U^*  &= \pi_2(a); \qquad ( a \in \A_1),\\
U D_1 U^* &= D_2,\\
UJ_1U^*&=J_2,
\end{align*}
where we have explicitly indicated the representations $\pi_i$ of $\A_i$ on $\H_i$ ($i=1,2$).
\end{defn}

Any spectral triple gives rise to a differential calculus on the $*$-algebra $\A$ \cite[Section VI.1]{C94} (see also \cite[Chapter 7]{Lnd97}). We focus only on differential one-forms, as this is sufficient for our applications to gauge theory below.  
\begin{defn}
\label{defn:one-forms}
The $\A$-bimodule of {\em Connes' differential one-forms} is given by
$$
\Omega^1_D(\A) := \left\{ \sum_k a_k [D,b_k]: a_k,b_k \in \A \right\},
$$
and the corresponding derivation $\dd : \A \to \Omega^1_D(\A)$ is given by $\dd = [D,\cdot]$.
\end{defn}

\begin{ex}
\label{ex:ym:1-forms}
In the case of a Riemannian spin manifold $M$ we can identify \cite{C85,C94} 
$$
\Omega^1_{\dirac}(C^\infty(M)) \cong \Omega^1_\dR(M),
$$ 
the usual De Rham differential one-forms. 

More generally, if $\A = \Gamma^\infty(M,B)$ for a locally trivial $*$-algebra bundle $B$ as in Example \ref{ex:ym}, then 
$$
\Omega^1_{D^B_M}(\A) \cong \Omega^1_\dR(M) \otimes_{C^\infty(M)} \Gamma^\infty(M,B). 
$$
That is to say, it is the space of smooth sections of the $*$-algebra bundle $B$ taking values in the one-forms on $M$.
\end{ex}

\bigskip

As a final preparation for the next sections, we recall the construction of a commutative subalgebra $\A_J$ of $\A$ \cite[Prop.~3.3]{CCM07} ({\em cf.} \cite{DS12}). 
\begin{defn}
Given a real spectral triple $(\A,\H,D;J)$ we define
$$
\A_J : = \left\{ a \in A: aJ = Ja^*\right\}.
$$
\end{defn}
As we will see shortly, this is a complex subalgebra, contained in the center of $\A$ (and hence commutative). 
\begin{rem}
The definition of the commutative subalgebra $\A_J$ is quite similar to the definition of a subalgebra of $\A$ defined in \cite[Prop.~3.3]{CCM07} ({\it cf.}\ \cite[Prop.~1.125]{CM07}), which is the {\it real} commutative subalgebra in the center of $\A$ consisting of elements for which $aJ = Ja$. Following \cite{DS12} we propose a similar but different definition, since this subalgebra will turn out to be very useful for the description of the gauge group associated to any real spectral triple, as we will describe below. 
\end{rem}
\begin{prop}
\label{prop:subalg}
Let $(\A,\H,D;J)$ be a real spectral triple. Then 
\begin{enumerate}
\item $\A_J$ defines an involutive commutative complex subalgebra of the center of $\A$. 
\item $(\A_J, \H,D;J)$ is a real spectral triple.
\item Any $a \in \A_J$ commutes with the algebra generated by the sums $\sum_j a_j [D,b_j] \in \Omega^1_D(\A)$ with $a_j,b_j \in \A$.
\end{enumerate}
\end{prop}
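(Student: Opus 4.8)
The plan is to verify the three claims in sequence, with the core computational input being the real-structure axioms $J^2=\varepsilon$, $JD=\varepsilon' DJ$, together with the commutant property and order-one condition from \eqref{comm-1ord}. I will write $b^0 := Jb^*J^{-1}$ throughout, as in Definition \ref{defn:st}.

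For (1), the first step is to show $\A_J$ is closed under the algebra operations. If $a,a'\in\A_J$, then $aa'J = a(Ja'^*) = (aJ)a'^* $ where I use $[a,(a')^0]=0$ to move $a$ past $Ja'^*J^{-1}$; unwinding, $aa'J = J a'^* a^* = J(aa')^*$, so $aa'\in\A_J$ — note this is where commutativity of $\A_J$ will also fall out, since the same manipulation with the factors swapped gives $a'aJ = J(aa')^*$ as well, whence $(aa'-a'a)J=0$ and $aa'=a'a$ because $J$ is an (anti-linear) isometry, hence injective. Closure under $*$ is immediate from $aJ=Ja^*$ by applying $J^{-1}$ on the left and $J$ on the right and using $J^2=\varepsilon=\pm1$. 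For linearity over $\C$ (as opposed to merely $\R$): if $aJ=Ja^*$ then for $\lambda\in\C$ we have $(\lambda a)J = \lambda(aJ) = \lambda J a^* = J(\bar\lambda a^*) = J(\lambda a)^*$, using anti-linearity of $J$; so $\lambda a\in\A_J$, and closure under addition is trivial. Finally, to see $\A_J$ lies in the center of $\A$: for $a\in\A_J$ and any $b\in\A$, the commutant property gives $[b,a^0]=0$, i.e.\ $b$ commutes with $Ja^*J^{-1}=aJJ^{-1}=a$ (here I used $aJ=Ja^*$ to rewrite $Ja^*J^{-1}$); hence $[a,b]=0$ for all $b\in\A$.

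For (2), I must check that $(\A_J,\H,D;J)$ satisfies all the axioms of Definition \ref{defn:st}. Most are inherited for free: $\A_J$ is a unital $*$-algebra (the unit $1$ clearly satisfies $1\cdot J = J = J\cdot 1^*$), it is faithfully represented since $\A$ is, $(i+D)^{-1}$ is still compact, and $[D,a]$ is bounded for $a\in\A_J\subseteq\A$. The real structure axioms $J^2=\varepsilon$, $JD=\varepsilon'DJ$ are properties of $J$ and $D$ alone, hence unchanged. The commutant property and order-one condition for $\A_J$ are special cases of those for $\A$ (restrict $a,b$ to $\A_J$). So this part is essentially bookkeeping.

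For (3), the goal is that every $a\in\A_J$ commutes with the unital algebra generated by $\Omega^1_D(\A)$. Since $a$ is central in $\A$ by part (1), it commutes with each $a_j$ and $b_j$; the only new ingredient needed is $[a,[D,b_j]]=0$ for $a\in\A_J$, $b_j\in\A$. The key computation: the order-one condition gives $[[D,a],b^0]=0$ for all $a,b\in\A$; by symmetry of the bracket (or by applying $J$-conjugation appropriately) one also has $[[D,b],a^0]=0$, and for $a\in\A_J$ we saw $a^0=a$, so $[[D,b],a]=0$ directly. Expanding a general element $\sum a_{j_1}[D,b_{j_1}]\cdots a_{j_k}[D,b_{j_k}]$ and using the Leibniz-type commutation of the central element $a$ past each factor (it commutes with the $a_{j}$'s by centrality and with the $[D,b_j]$'s by the order-one consequence just noted) finishes the claim. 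The main obstacle, such as it is, lies in part (1): getting the algebraic identity $aa'J = J(aa')^*$ to come out requires threading the commutant property through at exactly the right moment, and one must be careful that the manipulation yielding commutativity of $\A_J$ (dividing by the injective $J$) is valid; everything else is routine verification.
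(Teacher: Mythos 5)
Your proof is correct and follows essentially the same route as the paper: involutivity from $J^2=\varepsilon$, centrality from the commutant property via $a=Ja^*J^{-1}=a^0$, and part (3) from the order-one condition after swapping the roles of $a$ and $b$. The only difference is that you verify some details the paper leaves implicit (complex linearity, closure under multiplication), and your two-computation derivation of commutativity, while valid, is rendered unnecessary once centrality in $\A$ is established.
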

\begin{proof}
(1) If $a \in \A_J$, then also $a^* J = \epsilon J (Ja^*) J = \varepsilon J a J^2 = J a$, using that $J^2=\varepsilon$. Hence, $\A_J$ is involutive. Moreover, for all $a \in \A_J$ and $b \in \A$ we have $[a,b] = [Ja^*J^{-1}, b ] = 0$ by the commutant property \eqref{comm-1ord}. Thus, $\A_J$ is in the center of $\A$. 

(2) Since $\A_J$ is a subalgebra of $\A$, all conditions for a spectral triple are automatically satisfied.

(3) This follows from the order one condition in Equation \eqref{comm-1ord}: 
$$
[a, [D,b] ] = [ J a^*J^{-1}, [D,b]] = 0,
$$
for $a \in \A_J$ and $ b \in \A$.
\end{proof}

\begin{ex}
\label{ex:AJ-can-st}
In the case of a Riemannian spin manifold $M$ with real structure $J_M$ given by charge conjugation, one checks that
$$
C^\infty(M)_{J_M} = C^\infty(M). 
$$
More generally, for the spectral triples of Example \ref{ex:ym} we find that
$$
\Gamma^\infty(M,B)_{J} \cong C^\infty(M).
$$
\end{ex}

\section{Inner unitary equivalences as the gauge group}
The interpretation of the inner automorphism group as a gauge group was first presented in \cite{C96}.

\begin{defn}
An {\em automorphism} of a $*$-algebra $\A$ is a linear invertible map $\alpha: \A\to \A$ that satisfies
$$
\alpha(ab)=\alpha(a)\alpha(b), \qquad \alpha(a^*) = \alpha(a)^*. 
$$
We denote the group of automorphisms of the $*$-algebra $\A$ by $\Aut(\A)$.

An automorphism $\alpha$ is called {\em inner} if it is of the form $\alpha(a) = u a u^*$ for some element $u \in \U(\A)$ where
$$
\U(\A) = \{ u \in \A: u u ^* = u^* u =1 \}
$$
is the group of unitary elements in $\A$. The group of inner automorphisms is denoted by $\Inn(\A)$. 

The group of \emph{outer automorphisms} of $\A$ is defined by the quotient 
\begin{align*}
\Out(\A) := \Aut(\A) / \Inn(\A) . 
\end{align*}
\end{defn}
Note that $\Inn(\A)$ is indeed a normal subgroup of $\Aut(\A)$ since
\begin{align*}
\beta\circ\alpha_u\circ\beta^{-1}(a) = \beta\big(u\beta^{-1}(a)u^*\big) = \beta(u)a\beta(u)^* = \alpha_{\beta(u)} (a),
\end{align*}
for any $\beta \in \Aut(\A)$. 

An inner automorphism $\alpha_u$ is completely determined by the unitary element $u\in \U(\A)$, but not in a unique manner. More precisely, the map $\phi\colon \U(\A) \rightarrow \Inn(\A)$ given by $u \mapsto \alpha_u$ is surjective, but has a kernel. In fact, $\ker \phi$ is given by $\U(Z(\A))$ where $Z(\A)$ is the center of $\A$. We have thus proven the following 

\begin{prop}
There is the following isomorphism of groups:
\begin{align}
\label{eq:Inn_A}
\Inn(\A) \cong \U(\A) / \U(Z(\A)) .
\end{align} 
\end{prop}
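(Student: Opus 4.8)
The plan is to establish the group isomorphism \eqref{eq:Inn_A} by analyzing the surjective homomorphism $\phi\colon \U(\A) \to \Inn(\A)$, $u \mapsto \alpha_u$, and identifying its kernel precisely. First I would verify that $\phi$ is indeed a group homomorphism, which is immediate since $\alpha_u \circ \alpha_v (a) = u(vav^*)u^* = (uv)a(uv)^* = \alpha_{uv}(a)$. Surjectivity holds by the very definition of $\Inn(\A)$. The first isomorphism theorem then yields $\Inn(\A) \cong \U(\A)/\ker\phi$, so everything reduces to proving $\ker\phi = \U(Z(\A))$.

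For the kernel computation, the containment $\U(Z(\A)) \subseteq \ker\phi$ is easy: if $u$ is a central unitary, then $uau^* = uu^*a = a$ for all $a\in\A$, so $\alpha_u = \id$. The reverse inclusion is the main point: if $\alpha_u = \id$, then $uau^* = a$, equivalently $ua = au$ for all $a \in \A$, which says exactly that $u \in Z(\A)$; combined with $u \in \U(\A)$ this gives $u \in \U(Z(\A))$. So $\ker\phi = \U(Z(\A))$ as claimed.

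The only genuine subtlety --- and the step I would be most careful about --- is that for the quotient $\U(\A)/\U(Z(\A))$ to even make sense as a group, one needs $\U(Z(\A))$ to be a normal subgroup of $\U(\A)$; but this is automatic since it is the kernel of a homomorphism, or alternatively since $\U(Z(\A))$ is central in $\U(\A)$. I would also note in passing that $\U(Z(\A)) = Z(\U(\A))$ need not hold in general for non-commutative $\A$ without further hypotheses, but this is not needed: only the identification of $\ker\phi$ matters. Assembling these observations, the first isomorphism theorem delivers the stated isomorphism $\Inn(\A) \cong \U(\A)/\U(Z(\A))$, completing the proof.
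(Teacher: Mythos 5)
Your proposal is correct and follows exactly the route the paper takes: the paper's argument (given in the text immediately preceding the proposition) is precisely that $\phi\colon u\mapsto\alpha_u$ is a surjective homomorphism with kernel $\U(Z(\A))$, and you have simply filled in the routine verifications. Nothing further is needed.
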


\begin{ex}
\label{ex:auto-diffeo}
If $\A$ is a commutative $*$-algebra, then there are no non-trivial inner automorphisms since $Z(\A) = \A$. Moreover, if $\A=C^\infty(X)$ with $X$ a smooth compact manifold, then $\Aut(\A) \cong \Diff(X)$, the group of diffeomorphisms of $X$. Explicitly, a diffeomorphism $\phi :X \to X$ yields an automorphism by pullback of a function $f$:
$$
\phi^*(f)(x) = f(\phi(x)) ; \qquad (x \in X). 
$$
For a precise proof of the isomorphism between $\Aut(C(X))$ and the group of homeomorphisms of $X$, we refer to \cite[Theorem II.2.2.6]{Bla06}. For a more detailed treatment of the smooth analogue, we refer to \cite[Section 1.3]{GVF01}.
\end{ex}

\begin{ex}
\label{ex:auto-matrix}
At the other extreme, we consider an example where all automorphisms are inner. Let $\A= M_N(\C)$ and let $u$ be an element in the unitary group $U(N)$. Then $u$ acts as an automorphism on $a \in M_N(\C)$ by sending $a \mapsto u a u^*$. If $u = \lambda \1_N$ is a multiple of the identity with $\lambda \in U(1)$, this action is trivial. In fact, the group of automorphisms of $\A$ is the projective unitary group $PU(N)= U(N)/U(1)$, in concordance with \eqref{eq:Inn_A}. 
\end{ex}

Elements in $\U(\A)$ not only act on the $*$-algebra $\A$ as inner automorphisms, via the representation $\pi$ of $\A$ on $\H$ they also act on the Hilbert space $\H$ that is present in the spectral triple. In fact, with $U = \pi(u) J \pi(u) J^{-1}$, the unitary $u$ induces a unitary equivalence of real spectral triples in the sense of Definition \ref{defn:unitary-equiv}. More specifically, for such a $U$ we have 
\begin{align}
\label{eq:conj-a-U}
U \pi(a)U^* &= \pi \circ \alpha_u(a),\\
UJU^* &= J \nn.
\end{align}
We conclude that an inner automorphism $\alpha_u$ of $\A$ induces a unitarily equivalent spectral triple $(\A, \H, UDU^*; J)$, where the action of the $*$-algebra is given by $\pi\circ\alpha_u$. Note that the grading and the real structure are left unchanged under these {\em inner} unitary equivalences; only the operator $D$ is affected by the unitary transformation. For the latter, we compute, using \eqref{comm-1ord},
\begin{equation}
\label{eq:D-pure-gauge}
D \mapsto UDU^* = D + u [D,u^*] + \epsilon' J u [D,u^*] J^{-1},
\end{equation}
where as before we have suppressed the representation $\pi$. 
We recognize the extra terms as {\it pure gauge} fields $u \dd u^*$ in the space of Connes' differential one-forms $\Omega^1_D(\A)$ of Definition \ref{defn:one-forms}. This motivates the following definition 

\begin{defn}
\label{defn:gauge-ncg}
The {\em gauge group} $\cG(\A,\H;J)$ of the spectral triple is 
\begin{align*}
\cG(\A,\H;J) := \left\{ U=uJuJ^{-1} \mid u\in \U(\A) \right\} .
\end{align*}
\end{defn}
\begin{prop}
\label{prop:ses-gauge}
There is a short exact sequence of groups
$$
1 \to \U(\A_J) \to \U(\A) \to \cG(\A,\H;J) \to 1.
$$
Moreover, there is a surjective map $\cG(\A,\H;J) \to \Inn(\A)$.
\end{prop}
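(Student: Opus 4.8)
The plan is to construct the map $\U(\A)\to\cG(\A,\H;J)$ as $u\mapsto U=uJuJ^{-1}$ (suppressing $\pi$), verify it is a group homomorphism onto $\cG(\A,\H;J)$ by definition, and then identify its kernel with $\U(\A_J)$. The inclusion $\U(\A_J)\to\U(\A)$ is the obvious one, so the only real content is the kernel computation, after which exactness of the sequence $1\to\U(\A_J)\to\U(\A)\to\cG(\A,\H;J)\to 1$ follows immediately.

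First I would check that $u\mapsto U=uJuJ^{-1}$ is multiplicative. Here one uses that for $u\in\U(\A)$ the operator $u^0=Ju^*J^{-1}$ commutes with $\A$ (commutant property \eqref{comm-1ord}), hence $u$ commutes with $JuJ^{-1}=(u^*)^0{}^{*}$; more precisely, writing $V(u)=JuJ^{-1}$, anti-linearity of $J$ gives $V(u)=Ju J^{-1}$ and $uV(u)=V(u)u$ because $V(u)^*=Ju^*J^{-1}=u^0$ lies in the commutant of $\A$. Then for $u_1,u_2\in\U(\A)$ one has $U_1U_2=u_1V(u_1)u_2V(u_2)=u_1u_2V(u_1)V(u_2)=(u_1u_2)V(u_1u_2)$, using that $V$ is itself multiplicative since $J$ is an algebra anti-homomorphism composed with $*$ (so $V(u_1u_2)=Ju_1u_2J^{-1}=Ju_1J^{-1}Ju_2J^{-1}$). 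Unitarity of $U$ is clear from $J$ being an isometry and $u$ unitary. Surjectivity onto $\cG(\A,\H;J)$ is immediate from Definition \ref{defn:gauge-ncg}.

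Next, the kernel. Suppose $U=uJuJ^{-1}=1$, i.e. $uJu=J$ on $\H$, equivalently $uJ=Ju^{-1}=Ju^*$ (using $J$ anti-linear and $u$ unitary, so $JuJ^{-1}$ and $u$ combine as stated). But $uJ=Ju^*$ is exactly the defining relation of $\A_J$, so $u\in\A_J\cap\U(\A)=\U(\A_J)$. Conversely, if $u\in\U(\A_J)$ then $uJ=Ju^*$, hence $uJuJ^{-1}=Ju^*uJ^{-1}=JJ^{-1}=1$, so $u\in\ker$. This gives $\ker=\U(\A_J)$ and completes the short exact sequence. Care is needed with the suppressed representation $\pi$: the identity $U=1$ is an identity of operators on $\H$, and one must note $\pi$ is faithful (Definition \ref{defn:st}) to pull the relation $uJu=J$ back to an algebraic identity $uJ=Ju^*$ inside the operators generated by $\pi(\A)$ and $J$; this faithfulness is the only subtle point.

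For the last assertion, I would observe that by Proposition \ref{prop:subalg}(1), $\A_J\subseteq Z(\A)$, hence $\U(\A_J)\subseteq\U(Z(\A))$, and $\U(Z(\A))$ is precisely the kernel of the surjection $\U(\A)\to\Inn(\A)$, $u\mapsto\alpha_u$, from \eqref{eq:Inn_A}. Therefore the homomorphism $\U(\A)\to\Inn(\A)$ factors through the quotient $\U(\A)/\U(\A_J)\cong\cG(\A,\H;J)$, yielding the desired surjection $\cG(\A,\H;J)\to\Inn(\A)$; explicitly it sends $U=uJuJ^{-1}$ to $\alpha_u$, which is well-defined since \eqref{eq:conj-a-U} shows $U\pi(a)U^*=\pi(\alpha_u(a))$ depends on $U$ and not on the chosen lift $u$. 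The main obstacle, such as it is, is the bookkeeping around $J$'s anti-linearity and the suppressed faithful representation in the kernel computation; everything else is formal.
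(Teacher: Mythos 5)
Your proposal is correct and follows essentially the same route as the paper: the homomorphism $u\mapsto uJuJ^{-1}$, multiplicativity via the commutant property $[a,Jb^*J^{-1}]=0$, and the kernel computation $uJuJ^{-1}=1\Leftrightarrow uJ=Ju^*$ identifying $\ker$ with $\U(\A_J)$. Your treatment of the second assertion (factoring $u\mapsto\alpha_u$ through the quotient using $\A_J\subseteq Z(\A)$) is just a slightly more explicit version of the paper's appeal to \eqref{eq:conj-a-U}.
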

\proof
Consider the map $\Ad\colon \U(\A)\rightarrow \cG(\A,\H;J)$ given by sending $u\mapsto uJuJ^{-1}$. This map $\Ad$ is a group homomorphism, since the commutation relation $[u,JvJ^{-1}]=0$ of \eqref{comm-1ord} implies that 
$$
\Ad(v)\Ad(u) = vJvJ^{-1}uJuJ^{-1} = vuJvuJ^{-1} = \Ad(vu).
$$ 
By definition $\Ad$ is surjective, and $\ker(\Ad) = \{u\in \U(\A) \mid uJuJ^{-1}=1\}$. The relation $uJuJ^{-1}=1$ is equivalent to $uJ=Ju^*$ which is the defining relation of the commutative subalgebra $\A_J$. This proves that $\ker(\Ad) = \U(\A_J)$. 
For the second statement, note that the map $\cG(\A,\H;J) \to \Inn(\A)$ is given by Equation \eqref{eq:conj-a-U}, from which surjectivity readily follows. 
\endproof

\begin{corl}
\label{corl:gauge-ncg}
If $\A_J = Z(\A)$, then $\cG(\A,\H;J) \cong \Inn(\A)$. 
\end{corl}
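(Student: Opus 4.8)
The plan is to derive Corollary \ref{corl:gauge-ncg} directly from Proposition \ref{prop:ses-gauge} together with the group isomorphism \eqref{eq:Inn_A}. The short exact sequence in Proposition \ref{prop:ses-gauge} gives $\cG(\A,\H;J) \cong \U(\A)/\U(\A_J)$, obtained by passing to the quotient by $\ker(\Ad) = \U(\A_J)$. On the other hand, Proposition stating \eqref{eq:Inn_A} gives $\Inn(\A) \cong \U(\A)/\U(Z(\A))$. So the whole claim reduces to the observation that if $\A_J = Z(\A)$, these two quotients are literally the same group.

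The key step is to check that the surjective map $\cG(\A,\H;J) \to \Inn(\A)$ already constructed at the end of the proof of Proposition \ref{prop:ses-gauge} (via Equation \eqref{eq:conj-a-U}) is compatible with the quotient descriptions, i.e.\ that under the identifications $\cG(\A,\H;J) \cong \U(\A)/\U(\A_J)$ and $\Inn(\A) \cong \U(\A)/\U(Z(\A))$ this map is induced by the identity on $\U(\A)$. Concretely: the class of $U = uJuJ^{-1}$ maps to $\alpha_u$ by \eqref{eq:conj-a-U}, and $\alpha_u$ is the class of $u$ in $\U(\A)/\U(Z(\A))$; so the map is $u\U(\A_J) \mapsto u\U(Z(\A))$. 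When $\A_J = Z(\A)$ this is the identity map on $\U(\A)/\U(Z(\A))$, hence a bijective homomorphism, which is what the corollary asserts.

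There is really no obstacle here — the statement is a formal consequence of the two exact-sequence computations and the hypothesis identifying the two central subalgebras. The only point deserving a word of care is that the surjection $\cG(\A,\H;J)\to\Inn(\A)$ need not be injective in general precisely because $\U(\A_J)$ may be strictly smaller than $\U(Z(\A))$ (the kernel of $\cG \to \Inn(\A)$ is $\U(Z(\A))/\U(\A_J)$), so the hypothesis $\A_J = Z(\A)$ is exactly what is needed to kill that kernel. Thus the proof is a two-line argument: invoke Proposition \ref{prop:ses-gauge} for $\cG(\A,\H;J)\cong\U(\A)/\U(\A_J)$, invoke \eqref{eq:Inn_A} for $\Inn(\A)\cong\U(\A)/\U(Z(\A))$, and substitute $\A_J=Z(\A)$.
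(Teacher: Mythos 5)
Your proof is correct and is essentially the paper's own argument: the paper also deduces the corollary immediately from Proposition \ref{prop:ses-gauge} (which gives $\cG(\A,\H;J)\cong\U(\A)/\U(\A_J)$) and the isomorphism \eqref{eq:Inn_A}, substituting $\A_J=Z(\A)$. Your extra remark that the hypothesis exactly kills the kernel $\U(Z(\A))/\U(\A_J)$ of the surjection $\cG\to\Inn(\A)$ is a useful clarification but does not change the route.
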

\proof
This is immediate from the above Proposition and Equation \eqref{eq:Inn_A}.
\endproof

\begin{ex}
\label{ex:ym:gauge-group}
For the algebra $\A=\Gamma^\infty(M,B)$ appearing in Example \ref{ex:ym} we have 
$$
\A_J = Z(\A).
$$
Hence in this case the gauge group $\cG(\A,\H;J)$ coincides with the inner automorphisms of $\A$. Moreover, it turns out that $\cG(\A,\H;J)$ is isomorphic to the space of smooth sections of a $PU(N)$-bundle on $M$ with the same transition functions as $B$, at least when $M$ is simply connected ({\it cf.} \cite{BD13}). Moreover, one can reconstruct a principal $PU(N)$-bundle $P$ for which $B$ is an associated bundle:
$$
B \cong P \times_{PU(N)} M_N(\C)
$$
For more details, we refer to \cite{BoeS10} or \cite[Chapter 10]{Sui14}. 
\end{ex}



\subsection{The gauge algebra}
A completely analogous discussion applies to the definition of a gauge Lie algebra, where instead of automorphisms we now take {\bf derivations} of $\A$. The following definition essentially gives the infinitesimal version of $\cG(\A,\H;J)$.
\begin{defn}
\label{defn:gauge-alg-ncg}
The {\em gauge Lie algebra} $\g(\A,\H;J)$ of the spectral triple is
\begin{align*}
\g(\A,\H;J) := \left\{ T=X +J XJ^{-1} \mid X\in \u(\A) \right\},
\end{align*}
where $\u(\A)$ consists of the skew-hermitian elements in $\A$. 
\end{defn}
One easily checks using the commutant property,
$$
[T,T'] = [X,X'] + J [X,X'] J^{-1},
$$
so that $\g(\A,\H;J)$ is indeed a Lie algebra.

\begin{prop}
\label{prop:ses-gauge-alg}
There is a short exact sequence of Lie algebras
$$
0 \to \u(\A_J) \to \u(\A) \to \g(\A,\H;J) \to 0.
$$
\end{prop}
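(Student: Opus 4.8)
The plan is to mimic exactly the proof of Proposition~\ref{prop:ses-gauge}, replacing the multiplicative adjoint map by its additive (infinitesimal) counterpart. Concretely, define the map $\mathrm{ad}\colon \u(\A) \to \g(\A,\H;J)$ by $X \mapsto X + JXJ^{-1}$. By the very definition of $\g(\A,\H;J)$ this map is surjective, so the only things to verify are (a) that $\mathrm{ad}$ is a homomorphism of Lie algebras (equivalently, of the underlying vector spaces, since $\u(\A)$ is here regarded with its Lie bracket but the sequence is an exact sequence of Lie algebras and the relevant structure is compatible), and (b) that $\ker(\mathrm{ad}) = \u(\A_J)$.

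For (a): the map $X \mapsto JXJ^{-1}$ is real-linear (anti-linearity of $J$ composed with anti-linearity of $J^{-1}$ gives linearity), hence $\mathrm{ad}$ is real-linear, and the computation $[T,T'] = [X,X'] + J[X,X']J^{-1}$ already recorded just before Proposition~\ref{prop:ses-gauge-alg} shows $\mathrm{ad}([X,X']) = [\mathrm{ad}(X),\mathrm{ad}(X')]$, using the commutant property $[X, JX'J^{-1}]=0$ from \eqref{comm-1ord}; so $\mathrm{ad}$ is a Lie algebra homomorphism. Also one should note $\u(\A_J)$ is a subalgebra of $\u(\A)$: it is the intersection of $\u(\A)$ with $\A_J$, and $\A_J$ is a subalgebra by Proposition~\ref{prop:subalg}(1); it is an ideal in the kernel sense automatically since it is a kernel.

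For (b): if $X \in \u(\A)$ then $X + JXJ^{-1} = 0$ means $JXJ^{-1} = -X$, i.e.\ $JX = -XJ$. Since $X$ is skew-hermitian, $X^* = -X$, so $-XJ = X^*J$, giving $JX = X^* J$, which upon multiplying on the right by $J^{-1}$... more cleanly: $JX = -XJ = X^*J$, hence $XJ^{-1} \cdot$ — better to write it as $X J = J(-X)\cdot$ wait: from $JX=-XJ$ we get $X J^{-1}= J^{-1}(-X)$? Let me instead argue $J X = X^* J$ directly: $J X = -X J$ and $X=-X^*$ give $JX = X^* J$, i.e.\ $X J = J$ conjugate... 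The defining relation of $\A_J$ is $aJ = Ja^*$; setting $a = X$ this reads $XJ = JX^* = J(-X) = -JX$, i.e.\ $JX = -XJ$, which is precisely what we derived. Hence $X \in \ker(\mathrm{ad})$ iff $XJ = JX^*$ iff $X \in \A_J$, so $\ker(\mathrm{ad}) = \u(\A) \cap \A_J = \u(\A_J)$. This establishes exactness at $\u(\A)$, exactness at $\g(\A,\H;J)$ (surjectivity) and exactness at $\u(\A_J)$ (injectivity of the inclusion), completing the proof.

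I do not anticipate a genuine obstacle here; the only point requiring a moment's care is the bookkeeping with signs and with the anti-linearity of $J$ in identifying $\ker(\mathrm{ad})$ with $\u(\A_J)$, and making sure the skew-hermiticity of $X$ is used to convert the condition $JX = -XJ$ into the $\A_J$-defining relation $XJ = JX^*$. Everything else is a verbatim infinitesimal transcription of Proposition~\ref{prop:ses-gauge}.

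\begin{proof}
Consider the map $\mathrm{ad}\colon \u(\A) \to \g(\A,\H;J)$ given by $X \mapsto X + JXJ^{-1}$. Since $J$ is anti-linear, $X \mapsto JXJ^{-1}$ is (real-)linear, so $\mathrm{ad}$ is linear; and the identity $[T,T'] = [X,X'] + J[X,X']J^{-1}$ noted above, which uses the commutant property $[X,JX'J^{-1}]=0$ from \eqref{comm-1ord}, shows that $\mathrm{ad}$ is a homomorphism of Lie algebras. By Definition \ref{defn:gauge-alg-ncg} the map $\mathrm{ad}$ is surjective. Its kernel is $\{ X \in \u(\A) \mid X + JXJ^{-1} = 0\}$. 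For $X \in \u(\A)$ the relation $JXJ^{-1} = -X$ is equivalent to $JX = -XJ$, and since $X^* = -X$ this reads $XJ = JX^*$, which is exactly the defining relation of $\A_J$. Hence $\ker(\mathrm{ad}) = \u(\A) \cap \A_J = \u(\A_J)$, where $\u(\A_J)$ is a Lie subalgebra of $\u(\A)$ because $\A_J$ is an involutive subalgebra of $\A$ by Proposition \ref{prop:subalg}(1). This yields the asserted short exact sequence of Lie algebras.
\end{proof}
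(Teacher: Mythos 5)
Your proof is correct and is exactly the argument the paper intends: the paper omits the proof as "completely analogous" to Proposition~\ref{prop:ses-gauge}, and you carry out precisely that infinitesimal transcription (surjectivity by definition, the bracket identity via the commutant property, and the kernel computation $JXJ^{-1}=-X \Leftrightarrow XJ=JX^*$ using skew-hermiticity). The sign bookkeeping in identifying the kernel with $\u(\A_J)$ is handled correctly.
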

There are also inner derivations of $\A$ that are of the form $a \to [X,a]$; these form a Lie subalgebra $\Der_{\Inn}(\A)$ of the Lie algebra of all derivations $\Der(\A)$. If $\A_J = Z(\A)$ then 
$$
\g(\A,\H;J) \cong \Der_{\Inn}(\A),
$$
which is essentially the infinitesimal version of Corollary \ref{corl:gauge-ncg}.

 \section{Inner perturbations as gauge fields}
 \label{sect:pert}

We have seen that a non-abelian gauge group appears naturally when the $*$-algebra $\A$ in a real spectral triple is noncommutative. This gauge group acts naturally on the operator $D$ in the real spectral triple, giving rise to pure gauge field $u \dd u^*$ as perturbation of $D$. We now extend this action to obtain more general gauge fields, initially derived in \cite{C96}. A key role is played by the perturbation semigroup defined in \cite{CCS13} that is associated to the $*$-algebra $\A$ and extends the unitary group of $\A$. First, we need the following 
\begin{defn}
We denote by $\A^\op$ the opposite algebra: it is identical to $\A$ as a vector space, but with opposite product: 
$$
a^\op b^\op = (ba)^\op.
$$
Here $a^\op,b^\op$ are the elements in $\A^\op$ corresponding to $a,b \in \A$, respectively. 
\end{defn}

\begin{defn}
Let $\A$ be a unital $*$-algebra. Then we define the {\em perturbation semigroup} as
$$
\pert(\A) := \left\{ \sum_j a_j \otimes b_j^{\mathrm{op}}  \in \A \otimes \A^\op  \left| \begin{array}{l} \sum_j a_j b_j = 1  \\ \sum_j  a_j \otimes b_j^{\mathrm{op}} = \sum_j  b_j^* \otimes a_j^{*\mathrm{op}} \end{array}\right. \right\}
$$
with semigroup law coming from the product in $\A \otimes \A^\op$. 
\end{defn}

Given a spectral triple, the perturbation semigroup $\pert(\A)$ generates perturbations of the operator $D$:
$$
D \mapsto \sum_j a_j D b_j  = D + \sum_j a_j [D,b_j]
$$
For real spectral triples we use the map $\pert(\A) \to \pert(\A \otimes J \A J^{-1})$ sending $T \mapsto T \otimes JT J^{-1}$ so that
$$
D \mapsto \sum_{i,j} a_i \hat{a}_j D b_i \hat {b}_j
$$ 
where $\hat{a}_j = J a_j J^{-1}$ and $\hat{b}_j = J b_j J^{-1}$. One can show using the first-order condition that this reduces to 
$$
D_\omega := D+\omega + \epsilon' J\omega J^{-1},
$$
where $\omega^* =\omega := \sum_j a_j[D,b_j] \in \Omega^1_D(\A)$ is called the {\bf gauge field}. Alternatively, $\omega$ is called an {\bf inner perturbation} of the operator $D$, since it is the algebra $\A$ that generates the field $\omega$.

\begin{prop}
A unitary equivalence of $(\A,\H,D;J)$ as implemented by $U = u JuJ^{-1}$ with $u \in \U(\A)$ is a special case of a perturbation of $D$ by an element in $\pert(\A)$, namely, by $u \otimes u^{*\op} \in \pert(\A)$. 
\end{prop}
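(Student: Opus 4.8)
The plan is to verify directly that $u \otimes u^{*\op}$ lies in $\pert(\A)$ and then check that the prescribed perturbation recipe reproduces exactly the unitary equivalence implemented by $U = uJuJ^{-1}$. For the first point, I would observe that the two defining conditions of $\pert(\A)$ are immediate: writing the element as a single-term sum with $a_1 = u$, $b_1 = u^*$, the normalization $\sum_j a_j b_j = uu^* = 1$ holds since $u$ is unitary, and the reality condition $\sum_j a_j \otimes b_j^\op = \sum_j b_j^* \otimes a_j^{*\op}$ reads $u \otimes u^{*\op} = (u^*)^* \otimes (u^*)^{\op} = u \otimes u^{*\op}$, which is a tautology. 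Hence $u \otimes u^{*\op} \in \pert(\A)$.

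For the second point, I would trace through the construction given just above the statement. Applying the map $\pert(\A) \to \pert(\A \otimes J\A J^{-1})$, $T \mapsto T \otimes JTJ^{-1}$, to $T = u \otimes u^{*\op}$ yields the perturbation of $D$ by $u \hat u \, D \, u^* \hat u^*$ with $\hat u = JuJ^{-1}$. Using the commutant property $[u, \hat u^*] = 0 = [u^*, \hat u]$ from \eqref{comm-1ord} to reorder the factors, this equals $u \, \hat u \, D \, \hat u^* \, u^* = u (JuJ^{-1}) D (Ju^*J^{-1}) u^*$. Since $J$ is an isometry with $J^2 = \varepsilon$, one has $\hat u = JuJ^{-1}$ and $\hat u^* = Ju^*J^{-1}$, so $\hat u \, D \, \hat u^*$ is precisely the conjugation of $D$ by $JuJ^{-1}$; combined with the outer conjugation by $u$ this gives $U D U^*$ with $U = u (JuJ^{-1}) = uJuJ^{-1}$, which is the unitary of Definition \ref{defn:gauge-ncg}. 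Equivalently, in the language of the displayed formula $D_\omega = D + \omega + \epsilon' J\omega J^{-1}$, the gauge field produced is $\omega = u[D,u^*] = u \,\dd u^*$, which by \eqref{eq:D-pure-gauge} is exactly the pure gauge term appearing in $UDU^* = D + u[D,u^*] + \epsilon' Ju[D,u^*]J^{-1}$.

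The only mild subtlety—and the step I would be most careful about—is the bookkeeping with $J$ and the opposite algebra: one must use that $b^0 = Jb^*J^{-1}$ defines the right action, that the map $T \mapsto T \otimes JTJ^{-1}$ is compatible with the semigroup structure, and that the first-order condition is what allows the nested commutators to collapse to the simple form $D + \omega + \epsilon' J\omega J^{-1}$ (this collapse is asserted in the text preceding the proposition, so I may invoke it). Modulo that routine identification of conventions, the proof is a direct computation, and no genuine obstacle arises.
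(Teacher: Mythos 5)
Your proof is correct and follows essentially the same route as the paper, whose entire argument is the observation that inserting $\omega = u[D,u^*]$ into $D_\omega = D + \omega + \epsilon' J\omega J^{-1}$ reproduces \eqref{eq:D-pure-gauge}. Your additional checks — that $u \otimes u^{*\op}$ satisfies the two defining conditions of $\pert(\A)$, and the direct reordering of $u\hat{u} D u^*\hat{u}^*$ into $UDU^*$ via the commutant property — are correct elaborations of details the paper leaves implicit.
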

\proof
This follows upon inserting $\omega=u[D,u^*]$ in the above formula for $D_\omega$, yielding \eqref{eq:D-pure-gauge}.
\endproof

In the same way there is an action of the unitary group $\U(\A)$ on the new spectral triple $(\A, \H, D_\omega$) by unitary equivalences. Recall that $U= u J u J^{-1} $ acts on $D_\omega$ by conjugation:
\begin{equation*}
\label{eq:gauge-transf}
D_\omega \mapsto U D_\omega U^*.
\end{equation*}
This is equivalent to 
$$
\omega \mapsto u \omega u^* + u [D,u^*],
$$
which is the usual rule for a gauge transformation on a gauge field.
 
\begin{ex}
\label{ex:ym:pert}
We analyze the inner perturbations of $D^B_M$ for the real spectral triple introduced in Example \ref{ex:ym}. By Example \ref{ex:ym:1-forms} we have that $\omega$ can be identified with a self-adjoint one-form valued section of $B$, {\it i.e.} 
$$
\omega^* = \omega \in \Omega^1_\dR(M) \otimes_{C^\infty(M)} \Gamma^\infty(M,B). 
$$
The combination $\omega + \epsilon' J \omega J^{-1}$ amounts to the action of $\omega$ in the adjoint representation, with conjugation by $J$ implementing the right action of $\omega$ on $L^2(B \otimes S)$. This ensures that only the $\su(N)$-part of $\omega$ is relevant as an inner perturbation to $D^B_M$. With Example \ref{ex:ym:gauge-group} we can conclude that the gauge fields are given by one-form valued sections of the associated bundle $P \times_{PU(N)} \su(N)$. This brings us back to the usual definition of a gauge field in the context of principal bundles, see for instance \cite{Ble81} or \cite{AB83}. Again, full details can be found in \cite{BoeS10}. 
\end{ex}

\section{Localization}
\label{sect:localization}

Recall from Section \ref{sect:st} the construction of a complex subalgebra $\A_J$ in the center of $\A$ from a real spectral triple $(\A,\H,D;J)$. 
As $\A_J$ is commutative, Gelfand duality ({\it cf.} \cite{Bla06} or \cite{Tak02}) ensures the existence of a compact Hausdorff space such that $\A_J \subset C(X)$ as a dense $*$-subalgebra. Indeed, the norm completion of $\A_J$ is a commutative $C^*$-algebra and hence isomorphic to such a $C(X)$. We consider this space $X$ to be the `background space' on which $(\A,\H,D;J)$ describes a gauge theory.

Heuristically speaking, the above gauge group $\cG(\A,\H;J)$ considers only transformations that are `vertical', or `purely noncommutative' with respect to $X$, quotienting out the unitary transformations of the commutative subalgebra $\A_J$. In this section we make this precise by identifying a bundle $\fB \to X$ of $C^*$-algebras such that:
\begin{itemize}
\item the space of continuous sections $\Gamma(X,\fB)$ forms a $C^*$-algebra isomorphic to $A = \overline \A$, the $C^*$-completion of $\A$;
\item the gauge group acts as bundle automorphisms covering the identity. 
\end{itemize}
Moreover, we define a group bundle whose space of continuous sections is (under some conditions) isomorphic to the gauge group, and a bundle of $C^*$-algebras of which the gauge fields $\omega \in \Omega^1_D(\A)$ are sections on which the gauge group again acts by bundle automorphisms. 

We avoid technical complications that might arise from working with dense subalgebras of $C^*$-algebras, and work with the $C^*$-algebras $A_J$ and $A$ themselves, as completions of $\A_J$ and $\A$, respectively.

\medskip 

First, note that there is an inclusion map $C(X) \cong A_J \into A$. This means that $A$ is a so-called $C(X)$-algebra \cite{Kas88} which (in the unital case) is by definition a $C^*$-algebra $A$ with a map from $C(X)$ to the center of $A$. Indeed, it follows from Proposition \ref{prop:subalg} that $A_J$ is contained in the center of $A$. In such a case, it is well known that $A$ is the $C^*$-algebra of continuous sections of an upper semi-continuous $C^*$-bundle over $X$. We will briefly sketch the setup, referring to {\it e.g.}~the Appendix C in \cite{Wil07} for more details. 
Recall that a function $f:Y \to \R$ on a topological space $Y$ is {\bf upper semi-continuous} if $\{ y \in Y : f(y) <r \}$ is open for all $r \in \R$. 


\begin{defn}
\label{defn:C*-bundle}
An {\em upper semi-continuous $C^*$-bundle} over a compact topological space $X$ is a continuous, open, surjection $\pi: \fB \to X$ together with operations and norms that turn each fiber $\fB_x = \pi^{-1}(x)$ into a $C^*$-algebra, such that the map $a \mapsto \| a \|$ is upper semi-continuous and all algebraic operations are continuous on $\fB$.

A (continuous) {\em section} of $\fB$ is a (continuous) map $s: X \to \fB$ such that $\pi(s(x)) = x$. The vector space of continuous sections of $\fB$ is denoted by $\Gamma(X,\fB)$.
\end{defn}

A base for the topology on $\fB$ is given by the following collection of open sets:
\begin{equation}
\label{eq:base-top-fB}
W(s, \cO, \epsilon) := \{ b \in \fB: \pi(b) \in \cO \text{ and } \| b - s(\pi(b)) \| < \epsilon \},
\end{equation}
indexed by continuous sections $s\in \Gamma( X, \fB)$, open subsets $\cO \subset X$ and $\epsilon >0$. This heavily relies on the property that $\fB$ has {\em enough continuous sections}: for each element $b \in \fB$ there exists $s \in \Gamma(X,\fB)$ such that $s(\pi(b)) = b$. This property was established for upper semi-continuous $C^*$-bundles in \cite[Proposition 3.4]{Hof77}.

\begin{prop}
\label{prop:C*-bundle}
The space $\Gamma(X,\fB)$ of continuous sections forms a $C^*$-algebra when it is equipped with the norm
$$
\| s \| := \sup_{x \in X} \| s(x) \|_{\fB_x}.
$$
\end{prop}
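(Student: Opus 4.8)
The plan is to verify the $C^*$-algebra axioms for $\Gamma(X,\fB)$ directly, treating the pointwise operations inherited from the fibers and the supremum norm, with the only non-obvious point being that these pointwise operations actually land in $\Gamma(X,\fB)$ and that the supremum is finite. First I would note that $\Gamma(X,\fB)$ is closed under the pointwise algebraic operations: if $s,t \in \Gamma(X,\fB)$, then $x \mapsto s(x) + t(x)$, $x \mapsto \lambda s(x)$, $x \mapsto s(x) t(x)$ and $x \mapsto s(x)^*$ are continuous because, by Definition \ref{defn:C*-bundle}, all algebraic operations are continuous on the total space $\fB$ and these maps are compositions of continuous maps. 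Hence $\Gamma(X,\fB)$ is a $*$-algebra, unital with unit $x \mapsto 1_{\fB_x}$ (which is continuous, e.g.\ as a section realizing the unit via the ``enough sections'' property, and is seen to be continuous since it is locally the image of a continuous section).

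Next I would check that $\|s\| = \sup_{x\in X} \|s(x)\|_{\fB_x}$ is a well-defined finite number and a $C^*$-norm. Finiteness follows from compactness of $X$ together with upper semi-continuity of $b \mapsto \|b\|$: the function $x \mapsto \|s(x)\|$ is upper semi-continuous on the compact space $X$ (being the composition of the continuous section $s$ with the upper semi-continuous norm on $\fB$), and an upper semi-continuous real function on a compact space attains its supremum, in particular is bounded. The norm axioms $\|\lambda s\| = |\lambda|\,\|s\|$, $\|s+t\| \le \|s\|+\|t\|$, $\|st\|\le \|s\|\,\|t\|$ and $\|s^*s\| = \|s\|^2$ are immediate from the corresponding identities/inequalities in each fiber $\fB_x$ together with elementary properties of the supremum; and $\|s\| = 0$ forces $s(x) = 0$ for all $x$, i.e.\ $s$ is the zero section, so the norm is definite.

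The main obstacle, and the only part requiring real work, is completeness. Given a Cauchy sequence $(s_n)$ in $\Gamma(X,\fB)$, for each fixed $x$ the sequence $(s_n(x))$ is Cauchy in the $C^*$-algebra $\fB_x$ and so converges to some $s(x) \in \fB_x$; one checks that $s_n \to s$ uniformly in the fibrewise norm, i.e.\ $\sup_x \|s_n(x) - s(x)\| \to 0$. The genuine issue is continuity of the limit section $s: X \to \fB$. Here I would argue using the base of open sets $W(s', \cO, \epsilon)$ from \eqref{eq:base-top-fB}: fix $x_0 \in X$ and $\epsilon > 0$, choose $n$ with $\|s_n - s\| < \epsilon/3$, and use the ``enough continuous sections'' property (Hofmann, \cite[Proposition 3.4]{Hof77}) to pick a continuous section $s'$ with $s'(x_0) = s_n(x_0)$; by continuity of $s_n$ and $s'$ and upper semi-continuity of the norm there is a neighbourhood $\cO$ of $x_0$ on which $\|s_n(x) - s'(x)\| < \epsilon/3$, and then for $x \in \cO$ one has $\|s(x) - s'(x)\| \le \|s(x)-s_n(x)\| + \|s_n(x) - s'(x)\| < 2\epsilon/3 < \epsilon$, so $s(\cO) \subset W(s', \cO, \epsilon)$, showing $s$ is continuous at $x_0$. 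Since $\Gamma(X,\fB)$ is then a complete normed $*$-algebra satisfying the $C^*$-identity, it is a $C^*$-algebra. I would keep the write-up brief, as most steps are routine, and spend the bulk of the exposition on this last continuity argument.
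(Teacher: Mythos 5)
Your argument is correct in substance, but note that the paper does not actually prove this proposition: it simply defers to the literature (\cite{KW95,Nil96} and Appendix~C of \cite{Wil07}), so what you have written is essentially the standard proof from those references, supplied in full. Two places deserve tightening. First, in the completeness step, showing $s(\cO)\subset W(s',\cO,\epsilon)$ for one particular section $s'$ through $s_n(x_0)$ is not yet continuity of $s$ at $x_0$: continuity means that the preimage of \emph{every} basic neighbourhood $W(t,\cU,\delta)$ of $s(x_0)$ contains a neighbourhood of $x_0$. You need the additional (routine) observation that such sets form a neighbourhood base at $s(x_0)$, or equivalently you should run the triangle inequality against an arbitrary $W(t,\cU,\delta)\ni s(x_0)$, using upper semi-continuity of $x\mapsto\|s_n(x)-t(x)\|$ to shrink the neighbourhood; with that insertion the argument closes. (Here you can in fact take $s'=s_n$ itself and skip the appeal to \cite[Proposition 3.4]{Hof77}.) Second, your claim that $x\mapsto 1_{\fB_x}$ is a continuous section is not justified as stated for a general upper semi-continuous $C^*$-bundle, and the hand-wave ``locally the image of a continuous section'' does not establish it; fortunately the proposition only asserts that $\Gamma(X,\fB)$ is a $C^*$-algebra, not a unital one, so you should simply drop this claim (in the paper's situation unitality follows a posteriori from the identification $\Gamma(X,\fB)\cong A$ in Theorem~\ref{thm:usc}). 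Everything else --- closure under the fibrewise operations, finiteness of the supremum via upper semi-continuity on a compact space, the $C^*$-identity passing through the supremum, and uniform convergence of a Cauchy sequence --- is exactly right.
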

\proof
See \cite{KW95,Nil96} (see also Appendix C in \cite{Wil07}) for a proof of this result.
\endproof
In our case, after identifying $A_J$ with $C(X)$, we can define a closed two-sided ideal in $A$ by
\begin{equation}
\label{eq:ideal-Ix}
I_x :=  \left \{ f a : a \in A, f \in C(X), f(x) = 0 \right \}^{-}.
\end{equation}
We think of the quotient $C^*$-algebra $\fB_x := A/ I_x$ as the fiber of $A$ over $x$ and set 
\begin{equation}
\label{eq:C*-fib}
\fB := \coprod_{x \in X} \fB_x,
\end{equation}
with an obvious surjective map $\pi: \fB \to X$. If $a \in A$, then we write $a(x)$ for the image $a+I_x$ of $a$ in $\fB_x$, and we think of $a$ as a section of $\fB$. The fact that all these sections are continuous and that elements in $A$ can be obtained in this way is guaranteed by the following result.

\begin{thm}
\label{thm:usc}
Let $(\A,\H,D;J)$ be a real spectral triple and let $A_J \cong C(X)$. The above map $\pi: \fB \to X$ with $\fB$ as in Equation \eqref{eq:C*-fib} defines an upper semi-continuous $C^*$-bundle over $X$. Moreover, there is a $C(X)$-linear isomorphism of $A$ onto $\Gamma(X, \fB)$.
\end{thm}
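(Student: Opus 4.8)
The plan is to invoke the general correspondence between $C(X)$-algebras and upper semi-continuous $C^*$-bundles, and to verify that the hypotheses of that correspondence are met in our situation. First I would recall that by Proposition \ref{prop:subalg} the subalgebra $A_J$, hence its $C^*$-completion $A_J \cong C(X)$, lies in the center of $A$, so that the inclusion $C(X) \into Z(A)$ makes $A$ a unital $C(X)$-algebra in the sense of Kasparov \cite{Kas88}. Then I would introduce, for each $x \in X$, the closed ideal $I_x$ of Equation \eqref{eq:ideal-Ix}: explicitly $I_x = \overline{C_x(X)\cdot A}$ where $C_x(X) = \{f \in C(X) : f(x) = 0\}$ is the maximal ideal of $C(X)$ at $x$. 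One checks this is a closed two-sided ideal because $C(X)$ is central in $A$, and sets $\fB_x := A/I_x$. The content of the theorem is then twofold: (i) the disjoint union $\fB = \coprod_x \fB_x$, with $\pi$ the obvious projection and with the topology generated by the sets $W(s,\cO,\epsilon)$ of Equation \eqref{eq:base-top-fB}, is an upper semi-continuous $C^*$-bundle; and (ii) the map $A \to \Gamma(X,\fB)$ sending $a \mapsto (x \mapsto a(x) := a + I_x)$ is a $C(X)$-linear $*$-isomorphism onto the section algebra.

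For (i), the essential verifications are: each $\fB_x$ is a $C^*$-algebra (automatic, as a quotient of $A$ by a closed two-sided ideal); the norm function $b \mapsto \|b\|$ is upper semi-continuous on $\fB$, which comes down to showing that for fixed $a \in A$ the function $x \mapsto \|a(x)\|_{\fB_x} = \|a + I_x\|$ is upper semi-continuous on $X$; that the algebraic operations and the embedding of each $\fB_x$ are continuous for the $W(s,\cO,\epsilon)$-topology; that $\pi$ is open and continuous; and that there are enough continuous sections, i.e.\ every $b \in \fB_x$ is of the form $a(x)$ for some $a \in A$ --- but this last point is immediate here since $\fB_x = A/I_x$ is by construction a quotient of $A$, so $a \mapsto a(x)$ is surjective onto $\fB_x$. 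The upper semi-continuity of $x \mapsto \|a+I_x\|$ is the standard lemma underlying the theory; it follows from the identity $\|a + I_x\| = \inf\{\|a - f a + a'\| : f \in C_x(X), a' \in 0\} = \lim_{U \ni x}\sup\{\text{local data}\}$, or more cleanly from the fact (see \cite[Appendix C]{Wil07}) that $\|a(x)\| = \inf_{f} \|(1-f)a + f a\|$-type estimates combined with continuity of $f \mapsto \|\cdot\|$. Rather than reprove all of this, I would cite the established results: the $C(X)$-algebra/bundle dictionary is due to Kasparov \cite{Kas88}, with the existence of enough sections for upper semi-continuous bundles in \cite[Proposition 3.4]{Hof77}, and the full statement that a unital $C(X)$-algebra is exactly $\Gamma(X,\fB)$ for the bundle $\fB$ built from the quotients $A/I_x$ appears in \cite{KW95,Nil96} and is summarized in \cite[Appendix C, esp.\ Theorem C.25 and Proposition C.23]{Wil07}.

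For (ii), once $\fB$ is known to be a bundle, the map $\Phi : A \to \Gamma(X,\fB)$, $\Phi(a)(x) = a + I_x$, is clearly a $*$-homomorphism and is $C(X)$-linear since $C(X)$ acts centrally and $(fa)(x) = f(x)\, a(x)$ by definition of $I_x$. Injectivity of $\Phi$ amounts to $\bigcap_{x\in X} I_x = 0$; this is where one uses that $C(X)$ separates points of $X$ together with a partition-of-unity argument: if $a \in I_x$ for all $x$, then for every $\epsilon$ and every $x$ there is a neighborhood $U_x$ and $f_x \in C(X)$ supported near $x$ with $\|a - f_x a\|$ small on $U_x$; patching finitely many such $f_x$ by a partition of unity forces $\|a\| < \epsilon$. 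Surjectivity (and that $\|\Phi(a)\| = \sup_x \|a(x)\| = \|a\|$) follows from upper semi-continuity of $x \mapsto \|a(x)\|$ plus the fact that $\Gamma(X,\fB)$ is generated as a $C(X)$-module by the image of $A$, again a standard consequence of having enough sections and $\fB_x = A/I_x$; the sup-norm identity itself then gives isometry, hence a closed image, and density of $\Phi(A)$ in $\Gamma(X,\fB)$ finishes it. The main obstacle --- or rather the one genuinely substantive analytic input --- is the upper semi-continuity of the fiberwise norm function $x \mapsto \|a + I_x\|$; everything else is either formal or a direct appeal to the cited bundle theory. Accordingly I would structure the proof as: recall $A$ is a unital $C(X)$-algebra via Proposition \ref{prop:subalg}; invoke the Kasparov--Nilsen bundle construction to get $\fB$ and the isomorphism $A \cong \Gamma(X,\fB)$ from the references; and note that $C(X)$-linearity is built into the construction since the map $C(X) \to Z(A)$ is exactly the structure map.
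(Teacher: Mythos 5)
Your proposal is correct and follows essentially the same route as the paper: observe via Proposition \ref{prop:subalg} that $A_J\cong C(X)$ is central so that $A$ is a unital $C(X)$-algebra, and then invoke the standard correspondence between $C(X)$-algebras and upper semi-continuous $C^*$-bundles from \cite{KW95,Nil96} (see Appendix C of \cite{Wil07}). The extra detail you sketch (upper semi-continuity of $x\mapsto\|a+I_x\|$, enough sections, the partition-of-unity argument for $\bigcap_x I_x=0$) is exactly the content of the cited results.
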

\proof
This follows from a general result valid for any $C(X)$-algebra $A$, realizing it as the space of continuous sections of an upper semi-continuous $C^*$-bundle on $X$. We refer to \cite{KW95,Nil96} (see also Appendix C in \cite{Wil07}) for its proof. 
\endproof

Having obtained the $C^*$-algebra $A$ as the space of sections of a $C^*$-bundle, we are ready to analyze the action of the gauge group on $A$. 
\begin{defn}
The {\bf continuous gauge group} is defined by
$$
\cG(A,\H;J) \cong \frac{\U(A)}{\U(A_J)}.
$$
\end{defn}
This contains the gauge group $\cG(\A,\H;J)$ of Definition \ref{defn:gauge-ncg} as a dense subgroup in the topology induced by the $C^*$-norm on $A$. The next result realizes the gauge group as a group of vertical bundle automorphisms of $\fB$. 

\begin{prop}
\label{prop:gauge}
The action $\alpha$ of $\cG(A,\H;J)$ on $A$ by inner automorphisms induces an action $\tilde\alpha$ of $\cG(A,\H;J)$ on $\fB$ by continuous bundle automorphisms that cover the identity. In other words, for $g \in \cG(A,\H;J)$ we have
$$
\pi(\tilde \alpha_g(b)) = \pi(b); \qquad (b \in \fB).
$$
Moreover, under the identification of Theorem \ref{thm:usc} the induced action $\tilde\alpha_*$ on $\Gamma(X, \fB)$ given by 
$$
\tilde \alpha_{g*} (s)(x) = \tilde\alpha_g (s(x))
$$
coincides with the action $\alpha$ on $A$.
\end{prop}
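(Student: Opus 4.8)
The plan is to show that an inner automorphism $\alpha_g$ of $A$ preserves each ideal $I_x$, hence descends to an automorphism $\tilde\alpha_{g,x}$ of the fiber $\fB_x = A/I_x$, and that these fiberwise maps assemble into a continuous bundle automorphism of $\fB$ covering $\id_X$. First I would note that $\alpha_g$ is a $C(X)$-linear $*$-automorphism of $A$: since $g \in \U(A)/\U(A_J)$ comes from a unitary $u \in \U(A)$ and $A_J \cong C(X)$ sits in the center of $A$ (Proposition~\ref{prop:subalg}), conjugation by $u$ fixes $C(X)$ pointwise, so $\alpha_g(f a) = f\,\alpha_g(a)$ for $f \in C(X)$, $a \in A$. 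Consequently $\alpha_g$ maps the generating set $\{ f a : f(x) = 0\}$ of $I_x$ into itself, and being an isometric $*$-isomorphism it maps the closure $I_x$ onto $I_x$. Therefore $\alpha_g$ induces a well-defined $*$-isomorphism $\tilde\alpha_{g,x} \colon \fB_x \to \fB_x$ by $\tilde\alpha_{g,x}(a + I_x) = \alpha_g(a) + I_x$, i.e.\ $\tilde\alpha_{g,x}(a(x)) = (\alpha_g(a))(x)$.

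Next I would define $\tilde\alpha_g \colon \fB \to \fB$ to be $\tilde\alpha_{g,x}$ on the fiber $\fB_x$; by construction $\pi \circ \tilde\alpha_g = \pi$, so it covers the identity, and each $\tilde\alpha_{g,x}$ is an algebra/$*$-map, so $\tilde\alpha_g$ respects all the fiberwise operations. It remains to check continuity of $\tilde\alpha_g \colon \fB \to \fB$ with respect to the topology generated by the basic open sets $W(s,\cO,\epsilon)$ of \eqref{eq:base-top-fB}. Here I would use the "enough continuous sections" property: given $b \in \fB$ with $x = \pi(b)$, pick $a \in A$ with $a(x) = b$ (so $s_a$, the section $y \mapsto a(y)$, passes through $b$); then $\tilde\alpha_g(s_a(y)) = (\alpha_g(a))(y) = s_{\alpha_g(a)}(y)$, so $\tilde\alpha_g$ carries the continuous section $s_a$ to the continuous section $s_{\alpha_g(a)}$. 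To verify continuity at $b$, take a basic neighbourhood $W(s_{\alpha_g(a)}, \cO, \epsilon)$ of $\tilde\alpha_g(b)$; I claim $\tilde\alpha_g\bigl(W(s_a,\cO,\epsilon)\bigr) \subset W(s_{\alpha_g(a)},\cO,\epsilon)$, because for $c \in W(s_a,\cO,\epsilon)$ with $\pi(c) = y$ we have $\|\tilde\alpha_g(c) - s_{\alpha_g(a)}(y)\|_{\fB_y} = \|\tilde\alpha_{g,y}(c - a(y))\|_{\fB_y} = \|c - a(y)\|_{\fB_y} < \epsilon$, using that $\tilde\alpha_{g,y}$ is isometric. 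Since arbitrary open sets are unions of such $W$'s, this gives continuity of $\tilde\alpha_g$ on $\fB$; applying the same argument to $g^{-1}$ shows $\tilde\alpha_g$ is a homeomorphism, hence a continuous bundle automorphism.

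Finally, for the compatibility statement I would unwind the identification of Theorem~\ref{thm:usc}, under which $a \in A$ corresponds to the section $s_a \colon x \mapsto a(x) = a + I_x \in \fB_x$. Then $\tilde\alpha_{g*}(s_a)(x) = \tilde\alpha_g(s_a(x)) = \tilde\alpha_{g,x}(a + I_x) = \alpha_g(a) + I_x = s_{\alpha_g(a)}(x)$, so $\tilde\alpha_{g*}(s_a) = s_{\alpha_g(a)}$; that is, under the isomorphism $A \cong \Gamma(X,\fB)$ the pushforward action $\tilde\alpha_{g*}$ is intertwined with $\alpha_g$ on $A$, as claimed. I would also remark that $g \mapsto \tilde\alpha_g$ is a group homomorphism, which is immediate since $(\alpha_g \alpha_h)(a) + I_x = \alpha_g(\alpha_h(a)) + I_x$ shows $\widetilde{\alpha_g \alpha_h} = \tilde\alpha_g \tilde\alpha_h$ fiberwise.

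I expect the main obstacle to be the topological continuity argument on the total space $\fB$ — in particular making sure one correctly handles the fact that the topology on $\fB$ is only generated by (not equal to) the sets $W(s,\cO,\epsilon)$, and that the image of such a set under $\tilde\alpha_g$ is again basic rather than merely open; the algebraic claims (invariance of $I_x$, descent to fibers, intertwining with $\alpha$) are essentially formal once $C(X)$-linearity of $\alpha_g$ is in hand. A secondary point to be careful about is that $\alpha_g$ is genuinely well defined on the quotient $\U(A)/\U(A_J)$: two unitaries $u, u'$ representing the same $g$ differ by a central unitary from $\U(A_J) \subset Z(A)$, so conjugation by them agrees, and hence $\tilde\alpha_g$ does not depend on the chosen lift.
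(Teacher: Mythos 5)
Your proposal is correct and follows essentially the same route as the paper's proof: descend $\alpha_g$ to the fibers via $\alpha_g(I_x)\subset I_x$, verify continuity on the basic open sets $W(s,\cO,\epsilon)$, and check compatibility on sections of the form $x\mapsto a+I_x$. You supply somewhat more detail than the paper does (the $C(X)$-linearity of $\alpha_g$ coming from centrality of $A_J$, the isometry of the fiber maps in the continuity estimate, and well-definedness on the quotient $\U(A)/\U(A_J)$), but the underlying argument is the same.
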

\proof
The action $\alpha$ induces an action on $A/I_x = \pi^{-1}(x)$, since $\alpha_g (I_x) \subset I_x$ for all $g \in \cG(A,\H;J)$. We denote the corresponding action of $\cG(A,\H;J)$ on $\fB$ by $\tilde \alpha$, so that, indeed,
$$
\pi( \tilde \alpha_g (b)) = \pi (b) ; \qquad (b \in \pi^{-1}(x)). 
$$
Let us also check continuity of this action. In terms of the base $W(s, \cO, \epsilon)$ of \eqref{eq:base-top-fB}, we find that
$$
\tilde \alpha_g (W(s, \cO, \epsilon)) = W(\tilde \alpha_{g*} (s), \cO, \epsilon), 
$$
mapping open subsets one-to-one and onto open subsets. 

For the second claim, it is enough to check that the action $\tilde \alpha_*$ on the section $s: x \mapsto  a+ I_x \in \fB_x$, defined by an element $a \in A$, corresponds to the action $\alpha$ on that $a$. In fact, 
$$
\tilde \alpha_{g*} (s)(x) = \tilde \alpha_g (s(x)) = \alpha_g (a+I_x) = \alpha_g(a) + I_x,
$$
which completes the proof.
\endproof

The derivations in the gauge algebra $\g(\A,\H;J)$ also act vertically on the upper semi-continuous $C^*$-bundle $\fB$ defined in \eqref{eq:C*-fib}, and the induced action on the sections $\Gamma(X,\fB)$ agrees with the action of $\g(\A,\H;J)$ on $A$. 
\subsection{Localization of the gauge group}

We now investigate whether or when $\G(\A,\H;J)$ can be considered as the group of continuous sections of a group bundle on the same base space $X$. Set-theoretically, one expects the group bundle that corresponds to $\Gamma(X,\fB)$ to be given by
$$
\G\B:=\coprod_{x\in X} \frac{\U(\fB_x)}{\U(\C)}.
$$
We define a topology on $\G\B$ as follows. First, the group bundle
$$
\U\B := \coprod_{x\in X}\U(\fB_x)
$$
is equipped with the induced topology from $\B$. 
Since each $\fB_x$ is a complex unital algebra, we have $\U(\C) \subset \U(\fB_x)$ so that we have a group subbundle $\coprod_{x\in X}\U(\C) \subset \U\fB$. We also write $\U\C$ for this group subbundle. The topology of $\G\B$ is then the quotient topology of the bundle $\U\B$ by the fiberwise action of the group bundle $\U\C$.

Before stating our main result on the structure of the gauge group, we consider the spaces of continuous sections of the group bundles $\U\C$ and $\U\B$. 

\begin{prop}
\label{prop:group-bundles}
We have the following group isomorphisms:
\begin{align*}
\Gamma(X,\U\C) &\cong \U(A_J),\\
\Gamma(X,\U\B) &\cong \U(A).
\end{align*}
\end{prop}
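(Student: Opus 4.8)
The statement asserts two group isomorphisms, $\Gamma(X,\U\C)\cong\U(A_J)$ and $\Gamma(X,\U\B)\cong\U(A)$, and the natural strategy is to deduce both from Theorem~\ref{thm:usc}, which already gives a $C(X)$-linear $*$-isomorphism $A\cong\Gamma(X,\fB)$ (and, applied to the trivial bundle with fiber $\C$, the tautological isomorphism $A_J\cong C(X)\cong\Gamma(X,\U\C\text{'s ambient }\C\text{-bundle})$). The plan is: (i) observe that a unitary element of a $C^*$-algebra corresponds under a $*$-isomorphism to a unitary element, so that $\U(A)$ is carried onto the unitaries of $\Gamma(X,\fB)$; (ii) show that a continuous section $s\in\Gamma(X,\fB)$ is unitary in the sectional $C^*$-algebra \emph{if and only if} $s(x)\in\U(\fB_x)$ for every $x$, which identifies the unitary group of $\Gamma(X,\fB)$ with $\Gamma(X,\U\B)$; (iii) combine (i) and (ii) to get $\U(A)\cong\Gamma(X,\U\B)$, and run the identical argument with $\C$ in place of $\fB_x$ and $A_J$ in place of $A$ to get $\U(A_J)\cong\Gamma(X,\U\C)$; (iv) check that all these maps are group homomorphisms, which is automatic since multiplication of sections is pointwise and the isomorphism of Theorem~\ref{thm:usc} is multiplicative.

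The only substantive point is step (ii), the pointwise characterization of unitarity, and in particular the claim that $s\in\Gamma(X,\U\B)$ is \emph{continuous as a section of the group bundle $\U\B$} precisely when it is a continuous section of $\fB$ with unitary values. One direction is trivial: $\U\B$ carries the subspace topology from $\fB$, so a continuous section of $\fB$ landing in $\U\B$ restricts to a continuous section of $\U\B$, and conversely composing with the inclusion $\U\B\hookrightarrow\fB$ keeps continuity. The algebraic half — that $ss^*=s^*s=1$ in $\Gamma(X,\fB)$ iff $s(x)s(x)^*=s(x)^*s(x)=1_{\fB_x}$ for all $x$ — follows because the involution and multiplication on $\Gamma(X,\fB)$ are evaluated fiberwise (the $*$-algebra structure on $\Gamma(X,\fB)$ is precisely the pointwise one, as recorded in Proposition~\ref{prop:C*-bundle} and Theorem~\ref{thm:usc}), and the unit section $x\mapsto 1_{\fB_x}$ is the identity of $\Gamma(X,\fB)$. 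Here one uses that each $\fB_x=A/I_x$ is unital with unit the image of $1\in A$, so the constant section $1$ is continuous and is the multiplicative identity.

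The anticipated main obstacle is ensuring that the two notions of ``continuous section of $\U\B$'' coincide with ``unitary element of the sectional $C^*$-algebra'': this is where one must be a little careful that the topology on $\U\B$ is the one induced from $\fB$ (as stipulated in the paragraph preceding Proposition~\ref{prop:group-bundles}) and not some finer topology, and that $\fB$ has enough continuous sections (guaranteed by \cite[Proposition~3.4]{Hof77}) so that every fiberwise-unitary section is, after the identification $A\cong\Gamma(X,\fB)$, realized by an element $u\in A$ — and then $u$ is automatically unitary because $u(x)=s(x)$ is unitary for all $x$ and the $C^*$-norm on $A$ detects this ($\|u^*u-1\|=\sup_x\|u(x)^*u(x)-1\|=0$ by upper semicontinuity, hence $u^*u=1$, and similarly $uu^*=1$). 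Once this identification is in place the homomorphism property and bijectivity are formal, and the $\U\C$ case is verbatim the same argument with the one-dimensional fibers $\C$.
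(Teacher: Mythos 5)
Your proposal is correct and follows essentially the same route as the paper, whose two-line proof makes exactly your two observations: a continuous section of $\U\C$ is just a unitary continuous function on $X$, and under the identification $A\cong\Gamma(X,\fB)$ of Theorem \ref{thm:usc} unitarity is a fiberwise condition, so fiberwise-unitary continuous sections correspond to $\U(A)$. Your write-up merely supplies the details (induced topology on $\U\B$, pointwise $*$-operations, the unit section) that the paper leaves implicit.
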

\proof
Firstly, a continuous map from $X$ to $\U(\C)$ is simply given by a unitary continuous function on $X$. Secondly, since $\Gamma(X,\B) \cong A$, unitarity translates from the product in $A$ to the fiberwise product in $\B$, hence proving the result. 
\endproof

We also need the following well-known result on covering spaces ({\it cf.} \cite[Proposition 1.33]{Hat02}).
\begin{prop}
\label{prop:covering}
Suppose given a covering space $p: (\tilde Y,\tilde y_0) \to (Y,y_0)$ and a map $f:(X,x_0) \to (Y,y_0)$ with $X$ path-connected and locally path-connected. Then a lift $\tilde f:(X,x_0) \to (\tilde Y,\tilde y_0)$ of $f$ exists if and only if $f_\ast (\pi_1(X,x_0)) \subset p_\ast (\pi_1(\tilde Y,\tilde y_0))$. 
\end{prop}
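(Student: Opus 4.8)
\textbf{Proof proposal for Proposition \ref{prop:covering}.}

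The plan is to recall the standard lifting criterion for covering spaces and indicate why the stated hypotheses are exactly what make it work, since this is a textbook result included here only for later use. First I would observe that the claim is purely topological and does not interact with the operator-algebraic machinery of the preceding sections; it is precisely \cite[Proposition 1.33]{Hat02}, so the cleanest route is simply to cite that reference and then sketch the argument for completeness.

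For the sketch: the ``only if'' direction is immediate, since if $\tilde f$ is a lift of $f$ then $f_* = p_* \circ \tilde f_*$ on fundamental groups, so $f_*(\pi_1(X,x_0)) = p_*(\tilde f_*(\pi_1(X,x_0))) \subset p_*(\pi_1(\tilde Y, \tilde y_0))$. For the ``if'' direction, I would construct $\tilde f$ pointwise: given $x \in X$, choose a path $\gamma$ from $x_0$ to $x$ (possible since $X$ is path-connected), lift the path $f \circ \gamma$ to a path in $\tilde Y$ starting at $\tilde y_0$ using the path-lifting property of the covering $p$, and define $\tilde f(x)$ to be the endpoint of this lifted path. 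The inclusion hypothesis $f_*(\pi_1(X,x_0)) \subset p_*(\pi_1(\tilde Y,\tilde y_0))$ guarantees that this endpoint does not depend on the choice of $\gamma$: any two choices differ by a loop at $x_0$, whose image under $f$ lifts to a loop in $\tilde Y$ precisely because its class lies in $p_*(\pi_1(\tilde Y,\tilde y_0))$. Continuity of $\tilde f$ is then checked locally, using that $X$ is locally path-connected and that $p$ is a covering: near any point one can write $\tilde f$ as a composition of $f$ with a local section of $p$.

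The main obstacle, such as it is, is the well-definedness of $\tilde f$ at each point, and this is exactly where all three hypotheses enter: path-connectedness of $X$ ensures every point is joined to $x_0$ by a path, local path-connectedness ensures the resulting map is continuous, and the containment of fundamental-group images ensures independence of the choice of connecting path. Since the full details are entirely standard, I would keep the exposition brief and defer to \cite{Hat02} for the verification.
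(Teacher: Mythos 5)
Your proposal is correct and matches the paper's treatment: the paper states this as a standard result and simply cites \cite[Proposition 1.33]{Hat02} without proof, and your sketch is precisely the standard argument from that reference (functoriality for the ``only if'' direction; path-lifting plus the fundamental-group containment for well-definedness and local path-connectedness for continuity in the ``if'' direction). Nothing further is needed.
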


The following result generalizes a result of \cite{BD13} on Lie group bundles to the general setting of group bundles. 
\begin{thm}
\label{thm:lift}
If $X$ is simply connected and if there exists a subbundle $\tilde{\cG\B} \subset \U\B$ that is a covering space of $\cG\B$ (via the quotient map $\U\B \to \G\B$), then there is the following short exact sequence of groups
\begin{equation}
\label{eq:ses}
\xymatrix{1 \ar[r] & \Gamma(X,\U\C) \ar[r] &\Gamma(X,\U\B) \ar[r] &\Gamma(X, \G\B) \ar[r] &1.}
\end{equation}
Consequently, in this case the gauge group is given as the space of continuous sections of the group bundle $\G\B$, {\it i.e.}
$$
\G(A,\H;J) \cong \Gamma(X,\G\B).
$$
\end{thm}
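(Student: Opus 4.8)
The plan is to build the short exact sequence \eqref{eq:ses} and then read off the final isomorphism from exactness. The maps are the obvious ones: the inclusion $\Gamma(X,\U\C)\hookrightarrow\Gamma(X,\U\B)$ is pointwise inclusion of fibers, and the map $\Gamma(X,\U\B)\to\Gamma(X,\G\B)$ is postcomposition with the fiberwise quotient $\U\B\to\G\B$. Injectivity on the left is immediate, and exactness in the middle is also essentially formal: a section of $\U\B$ maps to the trivial section of $\G\B$ precisely when it takes values fiberwise in $\U\C$, and such a section is continuous as a map into $\U\C$ because $\U\C$ carries the subspace topology from $\U\B$. So two of the three nontrivial points are routine. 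Combining surjectivity on the right with Proposition \ref{prop:group-bundles} (which identifies $\Gamma(X,\U\C)\cong\U(A_J)$ and $\Gamma(X,\U\B)\cong\U(A)$) and the definition of $\cG(A,\H;J)$ as $\U(A)/\U(A_J)$ then yields $\cG(A,\H;J)\cong\Gamma(X,\G\B)$, using that $\cG(A,\H;J)$ is a quotient group and the sequence realizes $\Gamma(X,\G\B)$ as exactly that quotient.

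The real content is surjectivity of $\Gamma(X,\U\B)\to\Gamma(X,\G\B)$, i.e.\ every continuous section $s:X\to\G\B$ admits a continuous lift $\tilde s:X\to\U\B$. Here is where the hypotheses enter. First I would observe that a global lift exists as soon as a lift exists into the sub-bundle $\tilde{\cG\B}\subset\U\B$, since $\tilde{\cG\B}\subset\U\B$ and the composite $\tilde{\cG\B}\to\cG\B$ is the one we are inverting. By assumption the map $q:\tilde{\cG\B}\to\cG\B$ is a covering space. Now apply the lifting criterion, Proposition \ref{prop:covering}, with $Y=\cG\B$, $\tilde Y=\tilde{\cG\B}$, and $f=s:X\to\cG\B$: since $X$ is simply connected (hence $\pi_1(X)$ trivial, and one should note $X$ compact Hausdorff together with the simple-connectedness hypothesis gives path-connected and locally path-connected, or this should be read as part of the standing assumption), the condition $s_*(\pi_1(X))\subset q_*(\pi_1(\tilde{\cG\B}))$ is vacuous, so a continuous lift $\tilde s:X\to\tilde{\cG\B}\subset\U\B$ exists. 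This is the lift we need, and composing it with the inclusion $\tilde{\cG\B}\hookrightarrow\U\B$ and then noting $q\circ\tilde s=s$ completes surjectivity.

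The main obstacle is the passage from the sectionwise covering-space lift to a genuine group homomorphism statement, i.e.\ checking that the lift can be chosen compatibly so that the resulting sequence is a sequence of \emph{groups} and not merely of pointed sets: the map $\Gamma(X,\U\B)\to\Gamma(X,\G\B)$ is automatically a group homomorphism (it is pointwise a quotient homomorphism of groups), so in fact once surjectivity of this single homomorphism is established the group-theoretic conclusion follows with no further choices — surjectivity need only be checked set-theoretically, one section at a time, which is exactly what Proposition \ref{prop:covering} delivers. Thus the genuine difficulty is purely the point-set topology hidden in the covering-space hypothesis: verifying in any given example that the relevant sub-bundle $\tilde{\cG\B}$ is indeed a covering of $\cG\B$. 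That verification is not part of this theorem (it is assumed), so the proof itself reduces to assembling the pieces above; I would present it as: (i) define the maps and check exactness on the left and in the middle, (ii) prove surjectivity on the right via Propositions \ref{prop:covering} and the covering hypothesis, (iii) invoke Proposition \ref{prop:group-bundles} and the definition of the continuous gauge group to extract the final isomorphism.
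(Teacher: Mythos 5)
Your proposal is correct and follows essentially the same route as the paper: exactness on the left and in the middle is formal from the definition of $\G\B$, surjectivity of $\Gamma(X,\U\B)\to\Gamma(X,\G\B)$ is obtained by lifting each section through the covering $\tilde{\cG\B}\to\cG\B$ via Proposition \ref{prop:covering} and simple connectedness of $X$, and the final isomorphism follows from Proposition \ref{prop:group-bundles} together with the definition of $\cG(A,\H;J)$. Your added remark that one should also verify $X$ is path-connected and locally path-connected to invoke the lifting criterion is a reasonable point the paper leaves implicit.
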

\proof
Exactness of \eqref{eq:ses} is clear from the very definition of the group bundle $\G\B$, except perhaps for the claim of surjectivity of the map $\Gamma(X,\U\B) \to \Gamma(X,\G\B)$. This follows from Proposition \ref{prop:covering}, applied to a continuous section $g \in \Gamma(X,\G\B)$. Indeed, since $\pi_1( X)$ is trivial, there always exists a lift $\tilde g : X \to \tilde {\cG\B} \subset \U\B$, thus proving surjectivity. 

For the second statement, exactness of the sequence implies that
$$
\Gamma(X,\G\B) \cong \frac{\Gamma(X,\U\B)}{\Gamma(X,\U\C)} \cong \frac{\U(A)}{\U(A_J)} 
$$
using Proposition \ref{prop:group-bundles}. But this is precisely the definition of the group $\G(A,\H;J)$. 
\endproof

This result allows for the following refinement of Proposition \ref{prop:gauge}.
\begin{corl}
\label{corl:gauge}
Under the same conditions as in Theorem \ref{thm:lift}, the action of the gauge group $\G(A,\H;J)$ on $A$ is induced by the action of the fibers $\G\B_x := \U(\B_x)/\U(\C)$ on the fibers $\fB_x$ of $\fB$ by inner automorphisms. 
\end{corl}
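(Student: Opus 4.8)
The plan is to trace the vertical action of Proposition~\ref{prop:gauge} through the fiberwise quotients, using the identification $\G(A,\H;J)\cong\Gamma(X,\G\B)$ furnished by Theorem~\ref{thm:lift}. Recall that this identification is the composite $\U(A)/\U(A_J)\cong\Gamma(X,\U\B)/\Gamma(X,\U\C)\cong\Gamma(X,\G\B)$, the first isomorphism coming from Proposition~\ref{prop:group-bundles}; concretely, a unitary $u\in\U(A)$ corresponds to the section $x\mapsto u(x):=u+I_x\in\U(\fB_x)$, and hence a gauge transformation $g=[u]$ corresponds to the section $x\mapsto[u(x)]\in\G\B_x=\U(\fB_x)/\U(\C)$. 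So the first thing I would do is make this concrete description of the isomorphism explicit.

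Next I would check that each fiber $\G\B_x$ genuinely acts on $\fB_x$ by inner automorphisms. Since $\fB_x$ is a complex unital algebra we have $\U(\C)=\C\,1_{\fB_x}\subseteq\U(Z(\fB_x))$, so the conjugation homomorphism $\U(\fB_x)\to\Inn(\fB_x)$, $v\mapsto(b\mapsto vbv^*)$, factors through $\U(\fB_x)/\U(\C)=\G\B_x$. This yields a well-defined action of $\G\B_x$ on $\fB_x$ for each $x$, and therefore a fiberwise inner action of $\Gamma(X,\G\B)$ on $\Gamma(X,\fB)$ given by $(g\cdot s)(x)=\tilde g(x)\,s(x)\,\tilde g(x)^*$, for any pointwise unitary lift $\tilde g(x)\in\U(\fB_x)$ of $g(x)$; independence of the lift is exactly the centrality just noted.

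It then remains to match this fiberwise inner action with the action $\tilde\alpha$ of Proposition~\ref{prop:gauge} under the isomorphisms $\G(A,\H;J)\cong\Gamma(X,\G\B)$ and $A\cong\Gamma(X,\fB)$. Fixing $g=[u]$ with $u\in\U(A)$ and a section $s$ of $\fB$ determined by $a\in A$, so that $s(x)=a+I_x$, Proposition~\ref{prop:gauge} gives $\tilde\alpha_{g*}(s)(x)=\alpha_u(a)+I_x=uau^*+I_x=(u+I_x)(a+I_x)(u+I_x)^*=u(x)\,s(x)\,u(x)^*$, which is precisely the fiberwise inner action of the section $x\mapsto[u(x)]$ associated to $g$. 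The same computation also shows the outcome is independent of the lift $u$ of $g$: two lifts differ by an element of $\U(A_J)=\Gamma(X,\U\C)$, whose image in each $\fB_x$ lies in $\U(\C)\subseteq Z(\fB_x)$ and hence conjugates trivially.

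I do not expect a deep obstacle here: the argument is essentially bookkeeping once Theorem~\ref{thm:lift} and Proposition~\ref{prop:gauge} are available. The step that needs the most care---and which I would treat as the crux---is verifying that the isomorphism $\G(A,\H;J)\cong\Gamma(X,\G\B)$ really is the concrete one sending $[u]$ to $x\mapsto[u+I_x]$, together with checking that all the relevant indeterminacies ($\U(A_J)$ globally, $\U(\C)$ fiberwise) are central, so that ``conjugation by the value of the section at $x$'' is genuinely well defined as an automorphism of $\fB_x$.
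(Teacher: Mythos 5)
Your proposal is correct and follows essentially the same route as the paper: identify $g$ with a section of $\G\B$, lift it fiberwise to $u(x)\in\U(\fB_x)$, and compute $(\alpha_g(a))(x)=u(x)a(x)u(x)^*$. The extra care you take with well-definedness of the fiberwise conjugation (centrality of $\U(\C)$ and of $\U(A_J)$) is a sound elaboration of what the paper leaves implicit.
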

\proof
Let $g \in \G(A,\H;J)$ with pre-image $u \in\U(A)$, {\it i.e.} so that $\alpha_g(a) = u a u^*$. Then $g, u$ and $a$ can be considered as continuous sections of bundles $\G\B, \U\B$ and $\B$ on $X$, respectively. At a point $x \in X$ we have
$g(x) \in \G\B_x = \U(\B_x)/\U(\C)$ with pre-image $u(x)\in \U(\B_x)$ and we compute as sections of $\B \to X$:
$$
\left(\alpha_g(a)\right)(x) =  u(x) a(x) u(x)^*,
$$ 
thus establishing the result.
\endproof

Note that Theorem \ref{thm:usc} also gives a bundle description of $\Inn(A)$ if $Z(A)=A_J$. Indeed, in combination with Corollary \ref{corl:gauge} we find that then $\Inn(A) \cong\Gamma(X,\G\fB)$, realizing the group of inner automorphisms of $A$ as the space of continuous sections of a group bundle.


\subsection{Localization of gauge fields}
Also the gauge fields $\omega$ that enter as inner perturbations of $D$ can be parametrized by sections of some bundle of $C^*$-algebras. In order for this to be compatible with the vertical action of the gauge group found above, we will write any gauge field as $\omega_0 + \omega$ where $\omega_0, \omega \in \Omega^1_D(\A)$ and we call $\omega_0$ the background gauge field. The action of a gauge transformation on $\omega_0+\omega$ then induces the following transformation:
$$
\omega_0 \mapsto u \omega_0 u^* + u [D,u^*]; \qquad \omega \mapsto u \omega u^*.
$$

\begin{defn}
Let $(\A,\H,D)$ be a spectral triple. We denote by $C_D(\A)$ the $C^*$-algebra generated by $\A$ and $[D,\A]$. It is a $\Z_2$-graded $C^*$-algebra by letting $a \in \A$ have degree $0$ and $[D,a]$ have degree $1$.
\end{defn}

This notion was used in \cite{LRV12} as a generalization of the Clifford algebra. Indeed, for the canonical spectral triple $C_{D_M}(C^\infty(M))$ coincides with the Clifford algebra.

\begin{thm}
\label{thm:one-forms}
Let $(\A,\H,D;J)$ be a real spectral triple with $A_J \cong C(X)$. Then the following hold:
\begin{enumerate}
\item The $C^*$-algebra $C_D(\A)$ is a (graded) $C(X)$-algebra. 
\item There is a upper semi-continuous $C^*$-bundle $\fB_\Omega$ over $X$, explicitly given by
$$ 
\fB_\Omega = \prod_{x \in X} C_D(\A)/I_x',
$$ 
where $I_x'$ is the two-sided ideal in $C_D(\A)$ generated by $I_x$ defined in Equation \eqref{eq:ideal-Ix}. 
\item Every fiber $(\fB_\Omega)_x$ is a $\Z_2$-graded $C^*$-algebra with the grading induced by the grading on $C_D(\A)$. 
\item  The $\Z_2$-graded $C^*$-algebra $\Gamma(X,\fB_\Omega)$ of continuous sections is isomorphic to $C_D(\A)$. 
\end{enumerate}
Consequently, there is a subbundle $\fB_{\Omega^1}\subset \fB_{\Omega}$ defined as the closed span of sections given by elements $\omega \in \Omega^1_D(\A) \subset C_D(\A)$. 
\end{thm}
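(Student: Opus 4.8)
The plan is to reduce parts (2)--(4) to Theorem~\ref{thm:usc} applied with $C_D(\A)$ in place of $A$, so the first task is part (1): showing $C_D(\A)$ is a graded $C(X)$-algebra. Since $\A\subset C_D(\A)$ by construction, we have $A=\overline{\A}\subset C_D(\A)$ and hence a chain of unital inclusions $C(X)\cong A_J\subset A\subset C_D(\A)$. To see that the image of $C(X)$ is central in $C_D(\A)$, note that by Proposition~\ref{prop:subalg}(1) each $a\in\A_J$ commutes with $\A$, and by Proposition~\ref{prop:subalg}(3) it commutes with the algebra generated by $\Omega^1_D(\A)$, in particular with $[D,\A]$; thus $\A_J$ commutes with the dense $*$-subalgebra of $C_D(\A)$ generated by $\A\cup[D,\A]$, and by continuity of multiplication so does $A_J=\overline{\A_J}$. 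Since $\A_J$ consists of degree-$0$ elements, $A_J$ lies in the $+1$-eigenspace $C_D(\A)^{(0)}$ of the grading automorphism $\gamma$, so $C_D(\A)$ is indeed a graded $C(X)$-algebra.

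Granting (1), parts (2) and (4) follow from the general $C(X)$-algebra result behind Theorem~\ref{thm:usc}, once the canonical evaluation ideal $\overline{\mathfrak m_x\,C_D(\A)}$---where $\mathfrak m_x=\{f\in C(X):f(x)=0\}$---is identified with the ideal $I_x'$ generated by $I_x$. This is a short computation: on the one hand $\mathfrak m_x=\mathfrak m_x\cdot 1\subset I_x$ and $I_x=\overline{\mathfrak m_x A}\subset\overline{\mathfrak m_x C_D(\A)}$, so the closed two-sided ideal generated by $I_x$ equals the one generated by $\mathfrak m_x$; on the other hand $\mathfrak m_x$ is central in $C_D(\A)$, so that ideal is simply $\overline{\mathfrak m_x C_D(\A)}$. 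Hence $I_x'=\overline{\mathfrak m_x C_D(\A)}$, and the cited result yields the upper semi-continuous $C^*$-bundle $\fB_\Omega=\coprod_{x\in X}C_D(\A)/I_x'$ together with a $C(X)$-linear isomorphism $C_D(\A)\cong\Gamma(X,\fB_\Omega)$. For (3): since $\gamma$ fixes $A$, hence $A_J$, hence $\mathfrak m_x$ pointwise, one has $\gamma(I_x')=I_x'$, so $\gamma$ descends to a period-two $*$-automorphism of each fiber $(\fB_\Omega)_x$, making it a $\Z_2$-graded $C^*$-algebra, and the isomorphism of (4) intertwines these gradings.

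For the concluding statement about $\fB_{\Omega^1}$: because $f\omega\in\Omega^1_D(\A)$ whenever $f\in\A$ and $\omega\in\Omega^1_D(\A)$, the closure $N:=\overline{\Omega^1_D(\A)}$ in $C_D(\A)$ is a closed $C(X)$-submodule, contained in the odd part $C_D(\A)^{(1)}$. Transporting $N$ along the isomorphism of (4), it becomes a closed $C(X)$-submodule of $\Gamma(X,\fB_\Omega)$, and by the bundle-of-Banach-spaces analogue of Theorem~\ref{thm:usc} (a closed $C(X)$-submodule of the section space over a compact Hausdorff base has closed fibers and enough continuous sections---see the references cited there) this submodule is the section space of a sub-bundle $\fB_{\Omega^1}\subset\fB_\Omega$, whose fiber over $x$ is the closure of the image of $\Omega^1_D(\A)$ in $(\fB_\Omega)_x$. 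In other words, $\fB_{\Omega^1}$ is precisely the closed span of the sections determined by the elements $\omega\in\Omega^1_D(\A)$.

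The obstacles are administrative rather than conceptual: the one point requiring genuine attention is the identification $I_x'=\overline{\mathfrak m_x C_D(\A)}$, which is exactly what makes the explicit bundle written in the statement agree with the output of the abstract $C(X)$-algebra machinery, together with the bookkeeping needed to check that each construction respects the $\Z_2$-grading. Once these are in place, (2)--(4) and the final assertion are immediate consequences of Theorem~\ref{thm:usc} and its Banach-bundle version.
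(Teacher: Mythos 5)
Your proposal is correct and follows essentially the same route as the paper: establish the $C(X)$-algebra structure of $C_D(\A)$ via Proposition~\ref{prop:subalg}, then invoke the general machinery behind Theorem~\ref{thm:usc} and check that the $\Z_2$-grading passes to the fibers. Your added details --- in particular the identification $I_x'=\overline{\mathfrak m_x\,C_D(\A)}$, which reconciles the explicit ideal in the statement with the canonical evaluation ideal of the $C(X)$-algebra formalism, and the closed-submodule argument for the subbundle $\fB_{\Omega^1}$ --- correctly fill in steps the paper leaves implicit.
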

\proof
The fact that $C_D(\A)$ is a $C(X)$-algebra follows from Proposition \ref{prop:subalg}, stating that $A_J$ commutes with both $\A$ and $[D,\A]$. The construction of a $C^*$-bundle and the claimed isomorphism then follow as in Theorem \ref{thm:usc} above. The fact that this isomorphism respects the $\Z_2$-grading follows when one considers an element $\omega \in C_D(\A)$ as a section of $\fB_\Omega$ as follows:
$$
\omega(x) = \omega + I_x',
$$ 
noting that the degree of $\omega(x)$ is induced by degree of $\omega$. 
\endproof

This allows for the following geometrical description of the gauge fields $\omega$ that arise as inner perturbations of $D$.

\begin{thm}
\label{thm:gauge-field}
Let $\pi: \fB_\Omega \to X$ be as above and let $\omega \in \Omega^1(\A)$ be understood as a continuous section of $\fB_{\Omega^1}$. Then the gauge group $\cG(A,\H;J)$ acts fiberwise on $\fB_{\Omega^1}$ and we have 
$$
\omega(x) \mapsto (u \omega u^* )(x) = u \omega(x) u^*; \qquad (x \in X),
$$
for an element $uJuJ^{-1} \in \cG(\A,\H;J)$. 
\end{thm}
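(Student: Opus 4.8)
The plan is to realize the conjugation action of a unitary as a $C(X)$-algebra automorphism of $C_D(\A)$ and then push it down to the fibers $(\fB_\Omega)_x = C_D(\A)/I_x'$, exactly mirroring the argument of Proposition~\ref{prop:gauge}.

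First I would fix $u \in \U(A)$ (it suffices to treat the dense subgroup $\U(\A)$ and extend by continuity) and consider $\alpha_u := \Ad(u)$, i.e.\ $c \mapsto u c u^*$, on $C_D(\A)$; since $u, u^* \in A \subset C_D(\A)$ have degree $0$, this is a graded $*$-automorphism of $C_D(\A)$. By Proposition~\ref{prop:subalg}, $A_J$ lies in the center of $C_D(\A)$ (it commutes with $\A$ and with $[D,\A]$), so $\alpha_u$ fixes $A_J \cong C(X)$ pointwise; that is, $\alpha_u$ is $C(X)$-linear. In particular, for $f \in C(X)$ with $f(x)=0$ and $a \in A$ one has $\alpha_u(fa) = f\,\alpha_u(a)$, so $\alpha_u(I_x) \subseteq I_x$, and since $\alpha_u$ is an algebra automorphism it maps the ideal $I_x'$ generated by $I_x$ in $C_D(\A)$ onto itself. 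Hence $\alpha_u$ descends to a $*$-automorphism $\tilde\alpha_u$ of each fiber $(\fB_\Omega)_x$; these assemble to a bundle automorphism of $\fB_\Omega$ covering the identity, whose continuity follows as in Proposition~\ref{prop:gauge} from the identity $\tilde\alpha_u(W(s,\cO,\eps)) = W(\tilde\alpha_{u*}(s),\cO,\eps)$ on the basic open sets of~\eqref{eq:base-top-fB}. Because elements of $\U(A_J)$ are central in $C_D(\A)$, conjugation by them is trivial, so this action factors through $\cG(A,\H;J) = \U(A)/\U(A_J)$.

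Next I would check that $\tilde\alpha_u$ restricts to the subbundle $\fB_{\Omega^1}$. For this it is enough that $\Ad(u)$ maps $\Omega^1_D(\A)$ into itself: writing $\omega = \sum_k a_k[D,b_k]$ and using $[D,b_k]u^* = [D,b_k u^*] - b_k[D,u^*]$ gives $u\omega u^* = \sum_k u a_k[D,b_k u^*] - \sum_k u a_k b_k[D,u^*] \in \Omega^1_D(\A)$. Thus $\alpha_u$ permutes the sections of $\fB_\Omega$ coming from $\Omega^1_D(\A)$, and since it is isometric it preserves their closed linear span, which is precisely $\fB_{\Omega^1}$. The displayed transformation law is then immediate: regarding $\omega$ and $u\omega u^*$ as elements of $C_D(\A)$, their sections satisfy $\omega(x) = \omega + I_x'$ and $(u\omega u^*)(x) = u\omega u^* + I_x'$, and since $C_D(\A) \to (\fB_\Omega)_x$ is an algebra homomorphism, $u\omega u^* + I_x' = (u+I_x')(\omega + I_x')(u^* + I_x') = u\,\omega(x)\,u^* = \tilde\alpha_u(\omega(x))$. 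This also reproduces the gauge-transformation rule $\omega \mapsto u\omega u^*$ for the dynamical part of the gauge field recorded at the beginning of this subsection.

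The main obstacle is the passage to the fibers: one must be careful that $\alpha_u(I_x') \subseteq I_x'$ --- which relies essentially on the centrality of $A_J$ established in Proposition~\ref{prop:subalg}(3), including centrality within $[D,\A]$ --- and that the restriction to $\fB_{\Omega^1}$ is well defined, i.e.\ that conjugation genuinely sends Connes' one-forms to Connes' one-forms and, being bounded, extends from the spanning sections to their closed span. Everything else is formal bookkeeping already carried out for $\fB$ in Theorem~\ref{thm:usc} and Proposition~\ref{prop:gauge}.
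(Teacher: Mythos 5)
Your proof is correct and follows exactly the route the paper intends: the paper in fact states Theorem~\ref{thm:gauge-field} without an explicit proof, leaving it as the evident analogue of Proposition~\ref{prop:gauge} applied to the $C(X)$-algebra $C_D(\A)$ of Theorem~\ref{thm:one-forms}. Your write-up supplies precisely the details being elided --- $C(X)$-linearity of $\Ad(u)$ via centrality of $A_J$ in $C_D(\A)$, invariance of the ideals $I_x'$, the Leibniz computation showing $u\,\Omega^1_D(\A)\,u^* \subseteq \Omega^1_D(\A)$, and passage to the closed span --- so there is nothing to object to.
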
 

\begin{corl}
If the conditions of Theorem \ref{thm:lift} are satisfied, then the action of the gauge group $\cG(A,\H;J)$ on $\omega \in \Omega^1_D(\A)$ is induced from the action of the fibers $\G\B_x = \U(\B_x) / \U(\C)$ on the fibers $(\fB_\Omega)_x$ by
$$
\omega(x) \mapsto u(x) \omega(x) u(x)^*.
$$
\end{corl}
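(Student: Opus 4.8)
The plan is to combine Theorem~\ref{thm:gauge-field}, which already records the fiberwise transformation $\omega(x)\mapsto u\,\omega(x)\,u^*$ for a gauge transformation $uJuJ^{-1}$ acting on $\omega\in\Omega^1_D(\A)$ viewed as a continuous section of $\fB_{\Omega^1}$, with the group-bundle description of the gauge group from Theorem~\ref{thm:lift} and the fiberwise inner-automorphism picture of Corollary~\ref{corl:gauge}. The only genuinely new point is to reinterpret the global unitary $u\in\U(A)$ occurring there as the family of its fiber images $u(x)=u+I_x\in\U(\fB_x)$, and to verify that the conjugation action of $u$ on the fiber $(\fB_\Omega)_x$ is precisely conjugation by $u(x)$, transported along the natural map $\fB_x\to(\fB_\Omega)_x$.

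First I would fix $g\in\cG(A,\H;J)$; by Theorem~\ref{thm:lift} it corresponds to a continuous section of $\G\B$, which by the surjectivity of $\Gamma(X,\U\B)\to\Gamma(X,\G\B)$ in \eqref{eq:ses} lifts to a section of $\U\B$, that is, by Proposition~\ref{prop:group-bundles}, to a unitary $u\in\U(A)$ that is a pre-image of $g$, with fiber value $u(x)\in\U(\fB_x)$ representing $g(x)\in\G\B_x=\U(\fB_x)/\U(\C)$ at each $x\in X$. Next I would note that the quotient map $q_x\colon C_D(\A)\to C_D(\A)/I_x'=(\fB_\Omega)_x$ is a unital $*$-homomorphism and that $I_x\subset I_x'$ by the definition of $I_x'$, so the restriction of $q_x$ to the $C^*$-subalgebra $A\subset C_D(\A)$ factors through $\fB_x=A/I_x$, yielding a unital $*$-homomorphism $\fB_x\to(\fB_\Omega)_x$. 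Since this map carries $\U(\fB_x)$ into the unitaries of $(\fB_\Omega)_x$ and sends $\U(\C)$ to central scalars (which act trivially by conjugation), it induces the action of $\G\B_x=\U(\fB_x)/\U(\C)$ on $(\fB_\Omega)_x$ by inner automorphisms referred to in the statement.

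It then remains to trace the formula through. Under $q_x$ the image of $u$ equals the image of $u(x)$, while the image of $\omega\in\Omega^1_D(\A)\subset C_D(\A)$ is $\omega(x)$, so multiplicativity of $q_x$ gives
$$
(u\omega u^*)(x)=q_x(u\omega u^*)=q_x(u)\,q_x(\omega)\,q_x(u)^*=u(x)\,\omega(x)\,u(x)^*.
$$
Substituting this into the transformation rule of Theorem~\ref{thm:gauge-field} yields $\omega(x)\mapsto u(x)\,\omega(x)\,u(x)^*$, which is the assertion. I expect the only real care to be needed in the bookkeeping around the two ideals $I_x\subset A$ and $I_x'\subset C_D(\A)$ together with the embedding $A\hookrightarrow C_D(\A)$: one must ensure the global inner automorphism $\Ad(u)$ descends coherently through both quotients, namely that it preserves $I_x'$ (immediate, since $I_x'$ is a two-sided ideal and $u$ a unitary in $C_D(\A)$) and that its reduction modulo $I_x'$ is governed by the reduction of $u$ modulo $I_x$. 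Once this compatibility is in place, the corollary is essentially Theorem~\ref{thm:gauge-field} restated in the language of the group bundle $\G\B$.
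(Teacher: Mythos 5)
Your proof is correct and follows essentially the same route as the paper, which simply says the corollary ``follows by complete analogy with Corollary \ref{corl:gauge}'': lift $g$ to a unitary section $u$ of $\U\fB$, evaluate fiberwise, and use multiplicativity of the quotient maps. Your added care with the two ideals $I_x\subset A$ and $I_x'\subset C_D(\A)$ and the induced unital $*$-homomorphism $\fB_x\to(\fB_\Omega)_x$ is a worthwhile elaboration of a step the paper leaves implicit, but it is not a different argument.
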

\proof
This follows by complete analogy with Corollary \ref{corl:gauge} above.
\endproof

\section{Applications}

\subsection{Yang--Mills theory}
\label{subsect:ym}
Consider the real spectral triple of Example \ref{ex:ym}:
$$
(\Gamma^\infty(M,B), L^2(M,B \otimes S), D_M^B;  (\cdot)^* \otimes J)
$$
with $B \to M$ a locally trivial $*$-algebra bundle with typical fiber $M_N(\C)$ and $D_M^B$ is the Dirac operator on $M$ with coefficients in $B$. 

Let us summarize what we have already established in Examples \ref{ex:ym:gauge-group} and \ref{ex:ym:pert}:
\begin{itemize}
\item There is a $PU(N)$-principal bundle $P$ such that we have the following isomorphism of locally trivial $*$-algebra bundles:
$$
B \cong P \times_{PU(N)} M_N(\C).
$$ 
\item If $M$ is simply connected, then the gauge group $\G(\A,\H;J)$ is isomorphic to sections of the adjoint bundle to this principal bundle, that is to say, 
$$
\G(\A,\H;J) \cong \Gamma(M,P \times_{PU(N)} PU(N)).
$$
so that the group bundle $\G B \cong P \times_{PU(N)} PU(N)$.
\item The inner perturbations of $D_M^B$ are parametrized by sections of the associated bundle $P \times_{PU(N)} \su(N)$, with values in $\Omega^1(M)$. Moreover, we have an isomorphism
$$
(P \times_{PU(N)} \su(N)) \otimes \Lambda^1(M) \cong B_{\Omega^1},
$$ 
where $B_\Omega$ is the tensor product of $B$ with the Clifford bundle on $M$. 
Moreover, the action of $\G(\A,\H;J)$ on $B_{\Omega^1}$ agrees with the usual gauge action $\omega \mapsto u \omega u^*$ induced by the action of $PU(N)$ on $\su(N)$.
\end{itemize}

In conclusion, for this example our generalized gauge theory obtained in Theorems \ref{thm:usc}, \ref{thm:one-forms} and \ref{thm:gauge-field} agrees with the usual principal bundle description of gauge theories ({\it cf.} \cite{Ble81}). This example was first introduced in \cite{BoeS10}, following the globally trivial example of \cite{CC96,CC97}. A more general class of examples ---so-called almost-commutative manifolds--- is studied in \cite{Cac11,BD13}.

\subsection{Toric noncommutative manifolds}
\label{subsect:toric}
A less trivial example is given by the noncommutative manifolds introduced by Connes and Landi \cite{CL01}, further studied in \cite{CD02}. These were based on deformations of $C^*$-algebras by actions of a torus, as studied by Rieffel in \cite{Rie93a}. Starting point is the {\em noncommutative torus}, which dates back already to \cite{C80,Rie81}.

\begin{defn}
The $C^*$-algebra $A_\theta$ is defined to be the $C^*$-algebra generated by two unitaries $U_1$ and $U_2$ with the defining relation
$$
U_2 U_1 = e^{2 \pi i \theta} U_1 U_2,
$$
where $\theta$ is a real parameter.
\end{defn}
For irrational values of $\theta$, the $C^*$-algebra $A_\theta$ is called the {\em irrational rotation algebra}. More generally, for any $\theta \neq 0$ it is referred to as the {\em noncommutative 2-torus} and we also write $C(\bT^2_\theta)$ instead of $A_\theta$. This is to illustrate the fact that if $\theta=0$ we obtain the commutative 2-torus $\mathbb T^2$ and we consider $A_\theta$ as a deformation of $C(\mathbb T^2)$. One can also consider such deformations at the smooth level, and define a Fr\'echet $*$-algebra by
$$
\A_\theta = \left\{ \sum_{n_1 ,n_2 \in \Z} a_{n_1n_2} U_1^{n_1}U_2^{n_2}: (a_{n_1 n_2}) \in \S(\Z^2)\right\},
$$
where $\S(\Z^2)$ are the sequences of rapid decay. The algebra $\A_\theta$ deforms $C^\infty(\bT^2)$ since for $\theta=0$ the above expansion agrees with the usual Fourier series expansion on $\bT^2$ in terms of the generating unitaries.

Both $A_\theta$ and $\A_\theta$ carry an action $\sigma$ of $\bT^2$ by automorphisms, given on the generators by
$$
\sigma_t(U_1) = e^{i t_1} U_1; \qquad \sigma_t(U_2) = e^{it_2} U_2 .
$$
for $t = (t_1,t_2) \in \bT^2$.

\medskip

Now, consider an arbitrary compact Riemannian spin manifold $M$ that carries a (smooth) action of a $2$-torus. We can then `insert' the structure of the noncommutative torus in $M$, in the following, precise sense. 


\begin{defn}
\label{defn:toric-def}
We define the $C^*$-algebra $C(M_\theta)$ as the following invariant functions from $M$ with values in $A_\theta$:
$$
C(M, A_\theta)^{\bT^2} := \left\{ f \in C(M, A_\theta): f(t \cdot x) = \sigma_t(f(x)) , \quad x \in M \right\}
$$
\end{defn}
Similarly, we define $C^\infty(M_\theta):= C^\infty(M, \A_\theta)^{\bT^2}$ as $\bT^2$-equivariant smooth functions from $M$ with values in $\A_\theta$.

However, not only can we deform the $*$-algebras $C(M)$ and $C^\infty(M)$, also the spinor bundle, Dirac operator and charge conjugation as they appear in Example \ref{ex:can-st} can be deformed.
\begin{thm}[Connes--Landi \cite{CL01}]
Let $M$ be a compact Riemannian spin manifold, and let $(C^\infty(M), \H=L^2(M,\S), D; J)$ be the corresponding canonical spectral triple. Then there is a representation of $C^\infty(M_\theta)$ on $\H$ such that 
$$
(C^\infty(M_\theta), \H,D;J)
$$
is a real spectral triple. 
\end{thm}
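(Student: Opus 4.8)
The plan is to transport all the structure of the canonical spectral triple $(C^\infty(M),L^2(M,\S),D;J)$ through Rieffel's deformation machinery, viewing the $2$-torus action $\sigma$ as the deformation parameter and checking that each of the four axioms (boundedness of commutators, compact resolvent, the $J$-relations $J^2=\eps$, $JD=\eps'DJ$, the commutant property and the order-one condition) survives. First I would set up the representation: decompose $\H=L^2(M,\S)$ into spectral subspaces $\H=\bigoplus_{n\in\Z^2}\H_n$ under the (projective, because of the spin lift) torus action, and observe that both $C^\infty(M)$ and the new algebra $C^\infty(M_\theta)$ carry compatible $\Z^2$-gradings. Following Connes--Landi, the deformed product $a\times_\theta b$ on $\bT^2$-equivariant functions is implemented on $\H$ by the operator $a\mapsto \pi_\theta(a)$ where one twists the module action by the unitary $e^{2\pi i \theta\, P\wedge \cdot}$ built from the generator $P$ of the torus action on $\H$. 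Concretely, for homogeneous $a$ of degree $n$ one sets $\pi_\theta(a)\,\xi = a\cdot(\text{twist})\xi$ on each $\H_m$, so that $\pi_\theta$ is a $*$-representation of $C^\infty(M_\theta)$. This is the structural heart and I expect it to be the main obstacle: one must verify that this twisted action is genuinely an algebra homomorphism for $\times_\theta$ and is bounded (hence extends to the $C^*$-completion), which is precisely where the cocycle identity for $\theta\mapsto$ (phase) and the $\bT^2$-equivariance of the functions are used.

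Next I would handle $D$ and $J$. The crucial point is that the Dirac operator $D$ of the canonical triple commutes with the torus action (the action is by isometries preserving the spin structure), so $D$ preserves each $\H_n$ and therefore commutes with the deformation twist; hence $D$ requires no deformation and $\pi_\theta$ together with the old $D$ already satisfies $JD=\eps'DJ$ and compact resolvent trivially (the underlying Hilbert space and $D$ are unchanged). Boundedness of $[D,\pi_\theta(a)]$ for $a\in C^\infty(M_\theta)$ follows because $[D,\cdot]$ is a graded derivation that also commutes with the twist, so $[D,\pi_\theta(a)]=\pi_\theta([D,a]_{\mathrm{undeformed}})$ up to the same bounded twisting unitaries, and $[D,a]$ is bounded for the smooth deformed elements by the rapid-decay estimate on the coefficients $(a_{n_1n_2})$. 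For $J$: charge conjugation $J_M$ intertwines the degree-$n$ and degree-$(-n)$ subspaces and, under deformation, one checks $J_M$ still satisfies $J_M\pi_\theta(a)J_M^{-1}=\pi_{-\theta}(a^*)^{\mathrm{op}}$-type relation, which is exactly what is needed so that the commutant property $[\pi_\theta(a),J_M\pi_\theta(b)J_M^{-1}]=0$ and the order-one condition $[[D,\pi_\theta(a)],J_M\pi_\theta(b)J_M^{-1}]=0$ reduce, after cancelling the twisting phases from $a$ against those from $b^0$, to the corresponding classical identities for $(C^\infty(M),L^2(M,\S),D;J_M)$.

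Finally I would assemble these pieces: having a $*$-representation $\pi_\theta$ of $C^\infty(M_\theta)$ on $\H$ with $[D,\pi_\theta(a)]$ bounded, compact resolvent of $D$ (inherited), $J_M^2=\eps$, $J_MD=\eps'DJ_M$ (inherited), and the commutant and order-one conditions (verified above via the phase cancellation), Definition \ref{defn:st} is satisfied, so $(C^\infty(M_\theta),\H,D;J_M)$ is a real spectral triple. The only genuinely new computation, and the step most likely to need care, is the verification that the twisting phases arising from the $\Z^2$-grading of $a$ on the left exactly cancel those arising from $b^0=J_Mb^*J_M^{-1}$ acting on the right in the two conditions of \eqref{comm-1ord}; this is the technical core and is where one invokes that $b^0$ carries the opposite grading degree to $b$ together with $J_M$ being antilinear, so that the net phase is $e^{2\pi i\theta(n\wedge m)}e^{-2\pi i\theta(n\wedge m)}=1$ on $\H_k$. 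I would cite \cite{CL01} and \cite{CD02} for the detailed bookkeeping.
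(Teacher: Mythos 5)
The paper offers no proof of this statement: it is imported verbatim from Connes--Landi \cite{CL01} and used as a black box. Your sketch correctly reconstructs the standard argument of that reference --- spectral-subspace decomposition of $L^2(M,\S)$ under the (lifted) torus action, the twisted representation $\pi_\theta$ defined on homogeneous components and extended by rapid decay, invariance of $D$ under the torus action so that the resolvent condition and the relations $J^2=\eps$, $JD=\eps' DJ$ are inherited unchanged, and cancellation of the left twist on $a$ against the oppositely twisted right action $b^0=Jb^*J^{-1}$ to reduce the commutant and order-one conditions to their classical counterparts --- so the approach is the same as the cited source and is sound. The only bookkeeping you defer that genuinely matters is the spin lift: the torus acts on spinors only through a double cover, so the grading lives in $\tfrac12\Z^2$ and the twist conventions must be adjusted accordingly, exactly as handled in \cite{CL01,CD02}.
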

Note that $\H,D$ and $J$ are unchanged, it is only the $*$-algebra and its representation on $\H$ that are deformed. For this reason, these deformations are examples of {\em isospectral deformations}.

Let us then consider the gauge theory that corresponds to this real spectral triple. We distinguish two cases corresponding to $\theta$ being rational or irrational. These two cases require completely different techniques and yield entirely different results.

For $\theta$ rational we have the following result, due to \'Ca\'ci\'c in \cite[Theorem 4.28]{Cac14}. If $p,q$ are coprime and $\theta=p/q$, we set $\Gamma_\theta = \Z/q \Z$. 
\begin{thm}
\label{thm:cacic}
We have the following equivalence of spectral triples:
$$
(C^\infty(M_\theta),L^2(M,\S),D) \cong \Gamma^\infty(M/\Gamma_\theta,B),L^2(M/\Gamma_\theta, \pi_* \S \otimes B), \pi_*D)
$$
in terms of the projection map $\pi: M \to M/\Gamma_\theta$ and a $*$-algebra bundle $B :=M \times_{\Gamma_\theta} M_{q}(\C)$ with base space $M/\Gamma_\theta$, for a suitable action of $\Gamma_\theta$ on $M_{q}(\C)$ (see \cite{Cac14} for full details). 
\end{thm}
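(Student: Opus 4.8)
The plan is to strip the statement down to its essential content and then rebuild it. In the Connes--Landi construction the Hilbert space $\H=L^2(M,\S)$, the Dirac operator $D$ and the real structure $J$ are \emph{literally} those of the undeformed canonical spectral triple --- only the algebra and its representation are deformed --- so it suffices to (i) produce an explicit $*$-isomorphism of $C^\infty(M_\theta)$ onto the section algebra $\Gamma^\infty(M/\Gamma_\theta,B)$ of the stated matrix bundle, and then (ii) transport $\H$, $D$ and $J$ through this isomorphism, checking that everything in sight is equivariant. I would carry out (i) first and treat (ii) largely as a verification.

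For (i) the rationality $\theta=p/q$ with $\gcd(p,q)=1$ is exactly what degenerates the noncommutative torus: $e^{2\pi i\theta}$ is then a \emph{primitive} $q$-th root of unity, so $U_1^q$ and $U_2^q$ are central in $A_\theta$ and in fact generate its entire centre, a copy of $C(\bT^2)$. By the determinant obstruction every irreducible representation of $A_\theta$ has dimension divisible by $q$, hence exactly $q$ --- realised by a clock--shift Weyl pair $U_1\mapsto zC$, $U_2\mapsto wS$ --- and is determined up to unitary equivalence by its central character $(z^q,w^q)$. Thus $A_\theta$ is a $q$-homogeneous $C^*$-algebra over $\widehat{Z(A_\theta)}\cong\bT^2$, i.e.\ the section algebra of a locally trivial $M_q(\C)$-bundle, and the gauge action $\sigma$ covers the translation action of $\bT^2$ on $\widehat{Z(A_\theta)}$, which is transitive with finite stabiliser $\Gamma_\theta$. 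Feeding this into $C^\infty(M_\theta)=C^\infty(M,\A_\theta)^{\bT^2}$: the central part $C(M,Z(\A_\theta))^{\bT^2}$ is, by a slice argument on $M\times\widehat{Z(A_\theta)}$, exactly $C^\infty(M/\Gamma_\theta)$; and the remaining ``fibre'' directions, governed by the Weyl relation $U_2U_1=e^{2\pi ip/q}U_1U_2$, assemble --- once the central part is divided out --- into the clock--shift presentation of $M_q(\C)$, twisted over $M/\Gamma_\theta$ by the induced (projective Weyl) action of $\Gamma_\theta$ on $M_q(\C)$. This is precisely $\Gamma^\infty(M/\Gamma_\theta,B)$ with $B=M\times_{\Gamma_\theta}M_q(\C)$.

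For (ii) I would use that the torus acts by orientation- and spin-structure-preserving isometries, so $D$, $J$ and the grading commute with $\sigma$; hence they respect the $\bT^2$-isotypic decomposition of $L^2(M,\S)$ implicit in the slice argument, and under the isomorphism of (i) they go over to $\pi_*D$ and $(\,\cdot\,)^\ast\otimes J$ on $L^2(M/\Gamma_\theta,\pi_*\S\otimes B)$ --- the extra $M_q(\C)$-factor being locally constant along the fibre of $B$, so that $\pi_*D\otimes 1$ is well defined and $\pi_*\S\otimes B$ is the natural module on which $\Gamma^\infty(M/\Gamma_\theta,B)$ acts by left multiplication on the $B$-slot. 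It remains to read off the compatibilities of Definition~\ref{defn:unitary-equiv}, understood here as an equivalence of real spectral triples after this matrix amplification.

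\textbf{The main obstacle} is the geometry when the $\bT^2$-action on $M$ is not free: then $M/\Gamma_\theta$ is an orbifold rather than a manifold, $B$ and $\pi_*\S$ are not honest bundles, and one must either restrict to the locus where $\Gamma_\theta$ acts freely --- which forces number-theoretic conditions relating $q$ to the isotropy weights of the action --- or run the whole argument in the orbifold/groupoid category; this is where the technical weight of \cite{Cac14} lies. Coupled to it are the cohomological bookkeeping needed to confirm that the fibre twist is exactly the projective Weyl action of $\Gamma_\theta$ on $M_q(\C)$ and not some other $\mathrm{PU}(q)$-valued cocycle, and the care required to upgrade the $*$-isomorphism of (i) to a bona fide equivalence of \emph{real} spectral triples after an amplification that does not preserve the ``size'' of the Hilbert space.
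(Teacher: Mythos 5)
First, a point of order: the paper does not prove this theorem --- it is imported wholesale from \cite[Theorem 4.28]{Cac14} with an explicit pointer there for details --- so your sketch cannot be measured against an argument in the text and has to stand on its own. That said, your overall strategy (realize $A_{p/q}$ as the section algebra of an $M_q(\C)$-bundle over the spectrum $\bT^2$ of its centre, feed this into the invariance condition defining $C^\infty(M_\theta)$, then transport $\H$, $D$ and $J$, which are undeformed) is the right one and is in the spirit of the cited source.

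The genuine gap is that you never actually compute the finite group, and the computation does not return what you assert. The $\bT^2$-action on $\widehat{Z(A_{p/q})}\cong\bT^2$ sends $U_j^q\mapsto e^{iqt_j}U_j^q$, i.e.\ it is translation by $qt$; its stabiliser is the full $q$-torsion subgroup $(\Z/q\Z)^2\subset\bT^2$, of order $q^2$, not a cyclic group of order $q$. Equivalently, $(\Z/q\Z)^2$ is exactly the subgroup of $\bT^2$ acting on $A_{p/q}$ by inner automorphisms (conjugation by the $U_1^aU_2^b$). Your slice argument therefore identifies the central part of $C^\infty(M,\A_\theta)^{\bT^2}$ with $C^\infty(M/(\Z/q\Z)^2)$, not with $C^\infty(M/(\Z/q\Z))$. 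You can check this concretely on $\bS^3_{1/2}$: the centre of $C(\bS^3_{1/2})$ is generated by $|\alpha|^2$, $\alpha^2$ and $\beta^2$, i.e.\ it is $C(\bS^3)^{(\Z/2\Z)^2}$, which is strictly smaller than $C(\bS^3)^{\Z/2\Z}$ for every order-two subgroup of $\bT^2$. So either $\Gamma_\theta$ in the statement must be read as the $q$-torsion subgroup of $\bT^2$, or your step (i) does not land on the stated target; writing ``transitive with finite stabiliser $\Gamma_\theta$'' papers over precisely the identification your proof was supposed to establish, and it is the identification on which the base space, and hence everything downstream (Proposition \ref{prop:center-theta}, the gauge group bundle), depends. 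A secondary remark: the ``main obstacle'' you flag (non-freeness of the action) is real but resolves more cheaply than you suggest --- $\Gamma^\infty(M/\Gamma_\theta,B)$ should simply be read as $C^\infty(M,M_q(\C))^{\Gamma_\theta}$, a perfectly good $*$-algebra without any local triviality, whose fibres degenerate over non-free orbits exactly as in the paper's own analysis of $\bS^3_\theta$ at $\chi=0,\pi/2$.
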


Let us confront this with our gauge theory interpretation in terms of the commutative subalgebra $C(M_\theta)_J$ in $C(M)$. 

\begin{prop}
\label{prop:center-theta}
For the real spectral triple $(C^\infty(M_\theta), \H, D; J)$ we have for $\theta$ rational 
$$
C(M_\theta)_J = Z(C(M_\theta)).
$$
Moreover, in this case $C(M_\theta)_J \cong C(M/\Gamma_\theta)$.
\end{prop}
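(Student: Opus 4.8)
The plan is to establish both assertions by combining the structural result of Theorem \ref{thm:cacic} with the general properties of $\A_J$ from Proposition \ref{prop:subalg}. First I would invoke Theorem \ref{thm:cacic} to replace the real spectral triple $(C^\infty(M_\theta),\H,D;J)$ by the unitarily equivalent one of the form $(\Gamma^\infty(M/\Gamma_\theta,B),L^2(M/\Gamma_\theta,\pi_*\S\otimes B),\pi_*D;(\cdot)^*\otimes J_{M/\Gamma_\theta})$ associated to the locally trivial $*$-algebra bundle $B = M\times_{\Gamma_\theta}M_q(\C)$, exactly as in Example \ref{ex:ym}. Here one must check that the real structure produced by Connes--Landi matches the one of Example \ref{ex:ym} under \'Ca\'ci\'c's equivalence; this is implicit in the statement of Theorem \ref{thm:cacic}, which asserts an equivalence of (real) spectral triples, so I would simply cite \cite{Cac14} for the compatibility of $J$ with the bundle description.

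Once the triple is presented in the form of Example \ref{ex:ym}, the first claim $C(M_\theta)_J = Z(C(M_\theta))$ reduces to the assertion $\Gamma^\infty(M/\Gamma_\theta,B)_J = Z(\Gamma^\infty(M/\Gamma_\theta,B))$, which at the $C^*$-level says $\Gamma(M/\Gamma_\theta,B)_J = Z(\Gamma(M/\Gamma_\theta,B))$. But this was already recorded in Example \ref{ex:ym:gauge-group}: for the algebra $\A=\Gamma^\infty(M,B)$ one has $\A_J = Z(\A)$, and the same computation applies verbatim with $M$ replaced by $M/\Gamma_\theta$. Concretely, the inclusion $\A_J \subset Z(\A)$ is Proposition \ref{prop:subalg}(1); for the reverse inclusion one uses that the center of $\Gamma(M/\Gamma_\theta,B)$ consists of the scalar sections, i.e.\ functions on $M/\Gamma_\theta$ (since the fiber $M_q(\C)$ has trivial center), and such scalar sections $f$ satisfy $fJ = Jf^* = Jf$ because $J$ acts fiberwise by $(\cdot)^*\otimes J_{M/\Gamma_\theta}$ and a scalar function is real-valued precisely when $f = f^*$, while a complex scalar still satisfies $fJ(s\otimes\psi) = f\,s^*\otimes J_{M/\Gamma_\theta}\psi$ and $Jf^*(s\otimes\psi) = Jf^*\!s\otimes\psi = \overline{f^*}s^*\otimes J_{M/\Gamma_\theta}\psi = f\,s^*\otimes J_{M/\Gamma_\theta}\psi$, using anti-linearity of $J$. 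Hence every central element lies in $\A_J$, giving equality.

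For the second claim, the isomorphism $C(M_\theta)_J \cong C(M/\Gamma_\theta)$ is then immediate: the center $Z(\Gamma(M/\Gamma_\theta,B))$ is isomorphic to $C(M/\Gamma_\theta)$ precisely because $B$ has typical fiber $M_q(\C)$ with center $\C$, so $Z(\Gamma(M/\Gamma_\theta,B))$ consists exactly of the $\C\cdot\1$-valued sections, which are the continuous functions on $M/\Gamma_\theta$. Combining this with the first part yields $C(M_\theta)_J = Z(C(M_\theta)) \cong C(M/\Gamma_\theta)$.

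The main obstacle I anticipate is not any of the algebra above, which is routine, but rather making precise the compatibility of \'Ca\'ci\'c's equivalence of Theorem \ref{thm:cacic} with the \emph{real} structures and hence with the subalgebras $\A_J$ on both sides. Strictly speaking Theorem \ref{thm:cacic} as quoted concerns spectral triples (the real structure is suppressed in the displayed formula), so one should either verify directly that the intertwining unitary of \cite{Cac14} carries the Connes--Landi $J$ to $(\cdot)^*\otimes J_{M/\Gamma_\theta}$, or, alternatively, compute $C(M_\theta)_J$ intrinsically on the $\theta$-deformed side by identifying which elements $a\in C(M_\theta)$ satisfy $aJ = Ja^*$ for the undeformed $J = J_M$ acting on $L^2(M,\S)$ — and then match the answer with $C(M/\Gamma_\theta)$ via the $\Gamma_\theta$-quotient picture. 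I would present the argument along the first route, flagging the compatibility as a consequence of the construction in \cite{Cac14}.
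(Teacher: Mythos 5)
Your argument is correct in substance, but you take the harder of your two proposed routes, and the paper takes the easier one --- precisely the ``alternative'' you mention at the end. The paper's proof does use Theorem \ref{thm:cacic}, but only for the identification $Z(C(M_\theta))\cong C(M/\Gamma_\theta)$ coming from the triviality of $Z(M_q(\C))$; for the inclusion $Z(C(M_\theta))\subset C(M_\theta)_J$ it stays entirely on the $M$-side. Since the Connes--Landi deformation is isospectral, $J$ is literally the undeformed charge conjugation $J_M$ on $L^2(M,\S)$, and $C(M/\Gamma_\theta)=C(M)^{\Gamma_\theta}$ sits inside $C(M)$ acting in the undeformed way; Example \ref{ex:AJ-can-st} already gives $aJ_M=J_Ma^*$ for \emph{every} $a\in C(M)$, hence in particular for the $\Gamma_\theta$-invariant functions. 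This completely sidesteps the compatibility issue you correctly flag: Theorem \ref{thm:cacic} as quoted is an equivalence of spectral triples without the real structure, so your primary route leans on an intertwining of $J_M$ with $(\cdot)^*\otimes J_{M/\Gamma_\theta}$ that is not established in the paper and would have to be extracted from \cite{Cac14}. Your fiberwise computation that scalar sections satisfy $fJ=Jf^*$ is fine as far as it goes, but it is doing work that the isospectrality of the deformation renders unnecessary. If you rewrite along your second route, the proof collapses to three lines and matches the paper's.
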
 
\proof 
First, $C(M_\theta)_J \subset Z(C(M_\theta))$ by Proposition \ref{prop:subalg}. The converse inclusion is obtained as follows. We have $Z(C(M_\theta)) \cong C(M/\Gamma_\theta)$ because $Z(M_{q}(\C)) = \C$ for all fibers. Moreover, $C(M/\Gamma_\theta) = C(M)^{\Gamma_\theta}$ is a subalgebra of $C(M)$, all of whose elements satisfy the commutation relation $aJ = Ja^*$ ({\it cf.} Example \ref{ex:AJ-can-st}).
\endproof

Hence, the bundle $B=M \times_{\Gamma_\theta} M_{q}(\C)  \to M/\Gamma_\theta$ is the sought-for $C^*$-bundle on which to define our gauge theory. Theorem \ref{thm:cacic} tells us that the $C^*$-algebra $C(M_\theta)$ is isomorphic to the space of continuous sections of $B$ ---in concordance with our Theorem \ref{thm:usc}--- and for the gauge group we actually have the following result:
$$
\G(C(M_\theta),\H;J) \cong \Gamma(M/\Gamma_\theta, M \times_{\Gamma_\theta} PU(q)),
$$ 
if $M/\Gamma_\theta$ is simply connected. In other words, we are considering a $PU(q)$-gauge theory as in Section \ref{subsect:ym}. This Lie group acts on the fiber $M_{q}(\C)$ of $B$ in the adjoint representation. The fact that the above spectral triple is thus an example of an almost-commutative spectral triple in the sense of \cite{BoeS10,Cac11,BD13} was already noticed in \cite{Cac14}.

\bigskip

Let us now proceed with the case that $\theta$ is irrational.

\begin{prop}
\label{prop:center-Mtheta}
For the real spectral triple $(C^\infty(M_\theta), \H, D; J)$ we have for $\theta$ irrational
$$
C(M_\theta)_J = Z(C(M_\theta)).
$$
Moreover, in this case $C(M_\theta)_J \cong C(M/\bT^2)$. 
\end{prop}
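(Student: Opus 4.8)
The plan is to mimic the rational case of Proposition~\ref{prop:center-theta}, but with the geometry of the torus orbits playing the role of the finite group $\Gamma_\theta$. The inclusion $C(M_\theta)_J \subset Z(C(M_\theta))$ is already given by Proposition~\ref{prop:subalg}, so the real work is the reverse inclusion together with the identification of this center with $C(M/\bT^2)$.

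First I would compute the center of $C(M_\theta)$ directly from Definition~\ref{defn:toric-def}. An element $f \in C(M,A_\theta)^{\bT^2}$ is central iff $f(x)$ lies in $Z(A_\theta)$ for every $x$; since $\theta$ is irrational, $A_\theta$ is simple, so $Z(A_\theta) = \C$. Thus $f$ is a scalar-valued function on $M$, and the $\bT^2$-equivariance condition $f(t\cdot x) = \sigma_t(f(x)) = f(x)$ (the action $\sigma$ being trivial on scalars) forces $f$ to be constant on $\bT^2$-orbits. Hence $Z(C(M_\theta)) \cong C(M)^{\bT^2} \cong C(M/\bT^2)$, where $M/\bT^2$ carries the quotient topology (note $M/\bT^2$ is automatically compact Hausdorff since $\bT^2$ is compact). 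This simultaneously proves the "Moreover" clause once we know $C(M_\theta)_J = Z(C(M_\theta))$.

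For the remaining inclusion $Z(C(M_\theta)) \subset C(M_\theta)_J$, I would argue exactly as in the rational case: every such $f$ is, by the above, an honest $\bT^2$-invariant continuous function on $M$, i.e.\ an element of $C(M) \subset C(M_\theta)$ sitting inside the original commutative algebra. Since $J$ is unchanged from the canonical spectral triple of $M$ (these are isospectral deformations), and since by Example~\ref{ex:AJ-can-st} every element $a \in C(M)$ satisfies the defining relation $aJ = Ja^*$ of $\A_J$ (charge conjugation acts antilinearly and $C(M)_{J_M} = C(M)$), we conclude $f \in C(M_\theta)_J$. This closes the loop.

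The main obstacle I anticipate is making the identification $Z(A_\theta) = \C$ and the equivariance bookkeeping fully rigorous at the $C^*$-level: one must know that the center of $C(M,A_\theta)^{\bT^2}$ is computed fiberwise, which uses that $A_\theta$ is simple (true for irrational $\theta$) so that a continuous section whose values commute with all of $A_\theta$ pointwise actually lands in the scalars pointwise, and then that the pointwise-scalar, equivariant sections are precisely $C(M/\bT^2)$. A secondary subtlety, worth a remark, is that $M/\bT^2$ need not be a manifold (the action may have non-free orbits), but it is a compact Hausdorff space, which is all that Theorem~\ref{thm:usc} requires; the $C^*$-bundle $\fB$ then lives over this possibly singular orbit space, and one should be careful not to assert any smooth structure. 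The rest is routine.
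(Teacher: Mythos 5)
Your overall route is the same as the paper's: use Proposition \ref{prop:subalg} for the inclusion $C(M_\theta)_J \subset Z(C(M_\theta))$, identify $Z(C(M_\theta))$ with $C(M)^{\bT^2} \cong C(M/\bT^2)$ via triviality of $Z(A_\theta)$ for irrational $\theta$, and then use that the invariant functions and $J$ are undeformed, so that $C(M)^{\bT^2} \subset C(M_\theta)_J$ by Example \ref{ex:AJ-can-st}. (The paper compresses the first step into a citation of \cite[Proposition 3]{CD02}.)

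The one step that is not right as stated is ``$f$ is central iff $f(x)$ lies in $Z(A_\theta)$ for every $x$'', and the repair you sketch in your final paragraph attacks the wrong difficulty. The equivariance condition $f(t\cdot x)=\sigma_t(f(x))$ forces $f(x)$ to lie in the fixed-point subalgebra $A_\theta^{\bT^2_x}$ of the isotropy group $\bT^2_x$, and the values of equivariant elements at such a point range only over that subalgebra, not over all of $A_\theta$. Hence centrality of $f$ gives, pointwise, only that $f(x)$ commutes with $A_\theta^{\bT^2_x}$ --- and that algebra can be commutative and large: in the paper's own example $\bS^3_\theta$, at the endpoints $\chi=0,\pi/2$ the relevant fiber is $C(\bS^1)$, whose center is everything, so simplicity of $A_\theta$ alone does not force $f(x)\in\C$ there. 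The fix is a density argument: for an effective action the set of points with trivial isotropy is open and dense (principal orbit theorem; for an abelian group the principal isotropy is normal, hence trivial by effectiveness), evaluation of equivariant elements at such points is onto $A_\theta$, so $f(x)\in Z(A_\theta)=\C$ on a dense set, and since $\C\cdot 1$ is closed in $A_\theta$, continuity extends this to all of $M$. With that in place, equivariance reduces to invariance and the rest of your argument goes through (modulo the slip ``$C(M)\subset C(M_\theta)$'', which should read $C(M)^{\bT^2}\subset C(M_\theta)$: the full undeformed algebra does not embed in the deformed one, only the invariant part does).
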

\proof
First note that $Z(C(M_\theta)) = C(M_\theta)^{\bT^2}$, essentially because the center of $A_\theta$ is trivial if $\theta$ is irrational ({\em cf.} \cite[Proposition 3]{CD02}). Moreover, since $C(M)^{\bT^2}$ is unchanged under the deformation, as well as $J$, we find that $C(M_\theta)^{\bT^2} \cong C(M)^{\bT^2}$ is contained in $C(M_\theta)_J$ which also proves the second statement.
\endproof

This allows us to conclude with Theorem \ref{thm:usc} that $C(M_\theta)$ is isomorphic to the $C^*$-algebra $\Gamma(M/\bT^2, \fB^{M_\theta})$ of continuous sections of an upper semi-continuous $C^*$-bundle $\fB^{M_\theta} \to M/\bT^2$ and that $\G(C(M_\theta),\H;J)$ acts by vertical automorphisms on $\fB^{M_\theta}$. This also follows from the more general results of \cite{BM12} showing that torus-covariant $C(X)$-algebras are deformed to torus-covariant $C(X)$-algebras. Here a torus-covariant algebra is a $C(X)$-algebra which carries an action of $\bT^2$ that commutes with $C(X)$. In particular, this applies to the $C(M/\bT^2)$-algebra $C(M)$, deforming to the $C(M/\bT^2)$-algebra $C(M_\theta)$. In fact, even more can be said in this case.

\begin{thm}
\label{thm:toric-bundle}
The above $C^*$-bundle $\fB^{M_\theta} \to M/\bT^2$ is a continuous $C^*$-bundle. Moreover, its fibers are given by the following $C^*$-algebras:
$$
\fB^{M_\theta}_x  \cong C(\bT^2/\bT^2_x,A_\theta)^{\bT^2},
$$
for $x \in M/\bT^2$ having isotropy group $\bT^2_x \subseteq \bT^2$.
\end{thm}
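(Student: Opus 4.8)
The plan is to build on what has already been set up: by Proposition~\ref{prop:center-Mtheta} the base space is $X=M/\bT^2$ and $C(M_\theta)$ is a $C(M/\bT^2)$-algebra, so Theorem~\ref{thm:usc} already produces an upper semi-continuous $C^*$-bundle $\fB^{M_\theta}\to M/\bT^2$ with fibre $\fB^{M_\theta}_x=C(M_\theta)/I_x$, where $I_x$ is as in \eqref{eq:ideal-Ix}. Two things then remain: (i) identify $C(M_\theta)/I_x$ with $C(\bT^2/\bT^2_x,A_\theta)^{\bT^2}$, and (ii) show that for every $a\in C(M_\theta)$ the norm function $x\mapsto\|a(x)\|_{\fB^{M_\theta}_x}$ is not merely upper semi-continuous but continuous, which by the standard characterisation of continuous $C^*$-bundles (Appendix~C of \cite{Wil07}) is precisely the statement that $\fB^{M_\theta}$ is a continuous bundle. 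The only place the irrationality of $\theta$ enters is through Proposition~\ref{prop:center-Mtheta}.

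For (i) I would fix $x$, let $\cO\subset M$ be the corresponding orbit with isotropy group $\bT^2_x$ (so $\cO\cong\bT^2/\bT^2_x$ as $\bT^2$-spaces), and study the restriction $*$-homomorphism
$$
r_x:C(M,A_\theta)^{\bT^2}\longrightarrow C(\cO,A_\theta)^{\bT^2}\cong C(\bT^2/\bT^2_x,A_\theta)^{\bT^2}.
$$
Surjectivity of $r_x$ I would get by an equivariant extension argument: restriction $C(M)\otimes A_\theta\to C(\cO)\otimes A_\theta$ is onto by exactness of the minimal tensor product against the nuclear $C^*$-algebra $A_\theta$, and given $g\in C(\cO,A_\theta)^{\bT^2}$ with any continuous extension $\tilde g\in C(M,A_\theta)$, the average $g'(p):=\int_{\bT^2}\sigma_t^{-1}\big(\tilde g(t\cdot p)\big)\,\dd t$ lies in $C(M,A_\theta)^{\bT^2}=C(M_\theta)$ and still restricts to $g$ on $\cO$. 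The substantive point is $\ker r_x=I_x$. The inclusion $I_x\subseteq\ker r_x$ is immediate; for the reverse, given $a\in\ker r_x$ and $\eps>0$, the function $p\mapsto\|a(p)\|_{A_\theta}$ is continuous on $M$ and $\bT^2$-invariant (because $\sigma$ acts on $A_\theta$ by isometric $*$-automorphisms), it vanishes on $\cO$, so its sublevel set $\{p:\|a(p)\|<\eps\}$ is an open invariant neighbourhood of $\cO$ whose image in $M/\bT^2$ is open (the quotient map being open) and contains $x$; choosing $h\in C(M/\bT^2)$ with $0\le h\le1$, $h(x)=0$ and $h\equiv1$ off that image, one gets $ha\in I_x$ and $\|a-ha\|=\|(1-h)a\|\le\eps$, hence $a\in I_x$ since $I_x$ is closed. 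This gives $\fB^{M_\theta}_x\cong C(\bT^2/\bT^2_x,A_\theta)^{\bT^2}$, the second assertion of the theorem.

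For (ii), take $a\in C(M_\theta)\subset C(M,A_\theta)$; the function $p\mapsto\|a(p)\|_{A_\theta}$ on $M$ is continuous and, as already used, constant on $\bT^2$-orbits, hence descends to a continuous function $\bar a$ on $M/\bT^2$. Under the isometric isomorphism from (i) the element $a(x)$ corresponds to $a|_\cO$, and its $C^*$-norm is $\sup_{p\in\cO}\|a(p)\|_{A_\theta}=\bar a(x)$; thus $x\mapsto\|a(x)\|_{\fB^{M_\theta}_x}=\bar a(x)$ is continuous, which finishes the argument. I expect the main obstacle to be step (i), specifically the identification $\ker r_x=I_x$, since that is where one genuinely needs both the isometry of the torus action on $A_\theta$ and the openness of $M\to M/\bT^2$; once that is in place, continuity in (ii) is essentially automatic. (An alternative route to continuity would be to invoke Rieffel's results in \cite{Rie93a} that strict deformation preserves continuous-field structure, but the direct argument above seems more transparent and also yields the fibres explicitly.)
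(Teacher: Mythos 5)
Your argument is correct, but it takes a genuinely different route from the paper. The paper's proof is essentially a citation: it invokes Belmonte--Mantoiu \cite{BM12} (Proposition 5.1 and Corollary 5.3 for lower semi-continuity of the Rieffel-deformed field over the Hausdorff orbit space, and Corollary 6.2 for the identification of the fibers). You instead exploit the concrete realization $C(M_\theta)=C(M,A_\theta)^{\bT^2}$ from Definition \ref{defn:toric-def} and prove everything by hand: the restriction homomorphism $r_x$ onto the orbit algebra, surjectivity via Tietze extension plus equivariant averaging, the identification $\ker r_x=I_x$ by the standard sublevel-set/partition argument (using that $p\mapsto\|a(p)\|_{A_\theta}$ is continuous and orbit-invariant and that $M\to M/\bT^2$ is open), and then continuity of the fiber norms for free, since $\|a(x)\|_{\fB^{M_\theta}_x}=\|a(p)\|_{A_\theta}$ for any $p$ in the orbit over $x$. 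What your approach buys is self-containedness and transparency --- in particular it makes visible exactly why continuity (rather than mere upper semi-continuity) holds here, namely that every section is realized by a genuinely continuous $A_\theta$-valued function on $M$ whose pointwise norm descends to the quotient; it also produces the fiber identification and the norm continuity in one stroke. What the paper's route buys is generality: the Belmonte--Mantoiu results apply to arbitrary torus-covariant $C(X)$-algebras under Rieffel deformation, not just to those presented as invariant function algebras. The only minor quibbles: your appeal to exactness/nuclearity for surjectivity of $C(M)\otimes A_\theta\to C(\cO)\otimes A_\theta$ is unnecessary (the image of a $*$-homomorphism is closed and contains a dense set of elementary tensors), and you should note explicitly that the sup over the orbit collapses to a single value because $\|a(\cdot)\|$ is constant on orbits --- which you do use implicitly. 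Neither affects correctness.
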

\proof
In addition to upper semi-continuity, in \cite[Proposition 5.1]{BM12} lower semi-continuity is shown to hold under some additional conditions. In fact, since the $\bT^2$-orbit space of $M$ is Hausdorff, Corollary 5.3 in {\it loc.~cit.~}implies that the Rieffel deformation $C(M_\theta)$ of $C(M)$ can be expressed as a continuous field of $C^*$-algebras over this orbit space. In other words, it is the $C^*$-algebra of sections of a continuous $C^*$-bundle over $M/\bT^2$. The second claim follows from \cite[Corollary 6.2]{BM12}.
\endproof

Hence, the spectral triple $(C(M_\theta),\H,D;J)$ yields a gauge theory defined in terms of a $C^*$-bundle $\fB^{M_\theta} \to M/\bT^2$. The gauge group $\G(C(M_\theta),\H;J)$ is parametrized by unitaries in $C(M_\theta)$ and acts vertically on the bundle $\fB^{M_\theta}$. We now determine the bundle structure of the gauge group, thereby making use of Theorem \ref{thm:lift} above.

\begin{prop}
\label{prop:gauge-theta}
There exists a subbundle $\tilde{\G\B^{M_\theta}} \subset \U\B^{M_\theta}$ that is a covering space of $\G\B^{M_\theta}$ for the quotient map $\U\B^{M_\theta}\to \G\B^{M_\theta}$. Consequently, if $M/\bT^2$ is simply connected we have
$$
\G(C(M_\theta),\H;J) \cong \Gamma(M/\bT^2,\G\B^{M_\theta}),
$$
where the fibers of $\G\B^{M_\theta}$ are given by 
$$
\G\B_x^{M_\theta} \cong \frac{\U(C(\bT^2/\bT^2_x,A_\theta)^{\bT^2})}{\U(\C)}; \qquad (x \in M/\bT^2).
$$
\end{prop}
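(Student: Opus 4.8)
The plan is to verify the two hypotheses of Theorem~\ref{thm:lift} for the bundle $\G\B^{M_\theta}$ and then read off the conclusion. The base space $M/\bT^2$ is assumed simply connected, so the only nontrivial task is to produce a subbundle $\tilde{\G\B^{M_\theta}} \subset \U\B^{M_\theta}$ that is a covering space of $\G\B^{M_\theta}$ via the quotient map. Recall from Theorem~\ref{thm:toric-bundle} that the fiber $\fB^{M_\theta}_x \cong C(\bT^2/\bT^2_x, A_\theta)^{\bT^2}$, so the fiberwise quotient is exactly $\U(\fB^{M_\theta}_x)/\U(\C)$, and the kernel of $\U\B^{M_\theta}\to\G\B^{M_\theta}$ is the central subbundle $\U\C$ with fiber the circle $\U(\C)\cong \bT$. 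Thus, locally, the quotient map looks like quotienting a unitary group by its central circle, and a covering-space splitting amounts to choosing, in a continuous way, a $\bT$-equivariant complement — concretely, a continuous lift of the determinant or of the ``trace of the phase'' that lets us normalize representatives.

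First I would set up the local picture. Over a sufficiently small open $\cO \subset M/\bT^2$ on which the isotropy type is constant, the Rieffel deformation is locally a crossed-product-type algebra and $\fB^{M_\theta}_x$ is isomorphic to a fixed $C^*$-algebra of the form $C(\bT^2/\bT^2_0, A_\theta)^{\bT^2}$; its unitary group is a Banach Lie group, and the central $\bT$ sits inside it as the scalars. For such a $C^*$-algebra $C$ with $K_1(C)$ torsion-free (or more simply, when one has a continuous ``winding number'' homomorphism $\U(C)\to \Z$ or a continuous logarithm of the determinant on the relevant component), one can define $\tilde{\G\B}_x$ to be those $u\in\U(\fB^{M_\theta}_x)$ lying in the kernel of a fixed such homomorphism, or equivalently those $u$ in a chosen continuous slice of the $\bT$-action. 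The map $\tilde{\G\B}_x \to \G\B_x$ is then a bijection on a neighborhood of the identity component and a covering overall, with deck group a discrete quotient. I would then check that these local slices patch: the transition functions of $\U\B^{M_\theta}$ are inner (conjugation by local unitary sections), and conjugation commutes with the central circle and preserves the chosen homomorphism, so the local covering subbundles glue to a global subbundle $\tilde{\G\B^{M_\theta}}$. This verifies the hypothesis of Theorem~\ref{thm:lift}.

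Granting that, the theorem gives the short exact sequence
\begin{equation*}
1 \to \Gamma(M/\bT^2,\U\C) \to \Gamma(M/\bT^2,\U\B^{M_\theta}) \to \Gamma(M/\bT^2,\G\B^{M_\theta}) \to 1,
\end{equation*}
and combining with Proposition~\ref{prop:group-bundles} (with $A = C(M_\theta)$, $A_J \cong C(M/\bT^2)$) identifies the quotient $\U(C(M_\theta))/\U(C(M/\bT^2))$, which by definition is $\G(C(M_\theta),\H;J)$, with $\Gamma(M/\bT^2,\G\B^{M_\theta})$. The formula for the fibers $\G\B^{M_\theta}_x \cong \U(C(\bT^2/\bT^2_x,A_\theta)^{\bT^2})/\U(\C)$ is then immediate from the fiber identification of Theorem~\ref{thm:toric-bundle} together with the definition $\G\B_x = \U(\fB_x)/\U(\C)$.

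The main obstacle I expect is the construction of the covering subbundle $\tilde{\G\B^{M_\theta}}$ — specifically, ensuring that the fiber $C^*$-algebras $C(\bT^2/\bT^2_x,A_\theta)^{\bT^2}$ admit a continuous choice of $\bT$-equivariant slice for $\U$ modulo scalars, which in turn rests on having a well-behaved continuous ``phase/determinant'' homomorphism on their unitary groups. For the irrational-rotation algebra $A_\theta$ this is governed by the ordered $K_1$-group and the unique trace; the isotropy-dependence of the fiber (the isotropy group $\bT^2_x$ jumps along $M/\bT^2$) means one must control how this structure varies, and in particular check that the relevant $K_1$ classes behave compatibly under the local trivializations. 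Once the local slices and their $\bT$-equivariant patching are in hand, everything else is a formal consequence of Theorem~\ref{thm:lift} and Proposition~\ref{prop:group-bundles}.
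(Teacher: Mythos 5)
There is a genuine gap: you correctly reduce the proposition to verifying the hypothesis of Theorem~\ref{thm:lift}, namely the existence of a covering subbundle $\tilde{\G\B^{M_\theta}} \subset \U\B^{M_\theta}$, and you correctly identify this as the crux --- but you then leave it unconstructed, flagging it as ``the main obstacle I expect'' and gesturing at determinants, winding numbers and $K_1$-classes without producing an actual subbundle. The paper's proof does the construction explicitly and quite simply: using the \emph{unique tracial state} $\tau$ on $A_\theta$, one defines a phase map $\varphi(a) = \tau(a)/|\tau(a)|$ and, by composition, a map $\tilde\varphi$ on each fiber $\U(C(\bT^2/\bT^2_x,A_\theta)^{\bT^2})$; the subbundle is then cut out fiberwise by $\tilde\varphi(u)=1$. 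Since an element of the kernel of $\U\B^{M_\theta}\to\G\B^{M_\theta}$ is a scalar $\lambda\in\U(\C)$, and $\varphi(\lambda)=1$ forces $\lambda=1$, this is a one-fold covering. This single global normalization is what your sketch is missing, and it sidesteps all the $K_1$/determinant machinery you invoke.

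A second problem is your patching strategy. You propose to work ``over a sufficiently small open $\cO$ on which the isotropy type is constant,'' trivialize there, and glue using inner transition functions. But the bundle $\fB^{M_\theta}$ is \emph{not} locally trivial: the fibers $C(\bT^2/\bT^2_x,A_\theta)^{\bT^2}$ change isomorphism type as the isotropy group $\bT^2_x$ jumps, and the orbit-type strata are only locally closed, not open. Near a point with larger isotropy there is no neighborhood of constant fiber, so the local slices you describe cannot be defined there, and the gluing step fails exactly where it is needed. The paper's trace-phase construction is immune to this because $\varphi$ is defined once and for all on $A_\theta$ and makes sense uniformly on every fiber, whatever the isotropy. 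The remainder of your argument --- invoking Theorem~\ref{thm:lift}, Proposition~\ref{prop:group-bundles} and the fiber identification from Theorem~\ref{thm:toric-bundle} to get $\G(C(M_\theta),\H;J)\cong\Gamma(M/\bT^2,\G\B^{M_\theta})$ and the stated fibers --- matches the paper and is fine once the covering subbundle is actually in hand.
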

\proof
From Theorem \ref{thm:toric-bundle} it follows that the fibers $\U\B_x^{M_\theta}$ of $\U\B^{M_\theta}$ are given by the topological groups $\U(C(\bT^2/\bT^2_x,A_\theta)^{\bT^2})$. We define a subbundle of $\U\B^{M_\theta}$ using the unique tracial state $\tau$ on $A_\theta$. First, consider the phase map $\varphi: A_\theta \to U(1)$ given by
$$
\varphi(a) = \frac{\tau(a)}{|\tau(a)|} ; \qquad (a \in A_\theta).
$$
It induces a phase map on the fibers of $\B^{M_\theta}$ by composition:
\begin{align*}
\tilde \varphi: C(\bT^2/\bT^2_x,A_\theta)^{\bT^2}) &\to U(1),\\
f &\mapsto \varphi \circ f.
\end{align*}
We then define a subbundle $\tilde{\G\B^{M_\theta}} \subset \U\B^{M_\theta}$ by giving its fibers:
$$
\tilde{\G\B^{M_\theta}}_x = \left\{ u \in \U : \tilde \varphi(u)=1 \right\}.
$$
For $\tilde{\G\B^{M_\theta}}$ to be a covering space of $\G\B^{M_\theta}$, we determine the kernel of the quotient map $\U\B^{M_\theta} \to \G\B^{M_\theta}$, intersected with $\tilde{\G\B^{M_\theta}}$. In fact, being in the kernel amounts to $u \in \U(\C)$ so that $\varphi(u)=1$ implies that then $u=1$. Hence, $\tilde{\G\B^{M_\theta}}$ is a one-fold covering of $\G\B^{M_\theta}$. 

If $M/\bT^2$ is simply connected, then  Theorem \ref{thm:lift} and \ref{thm:toric-bundle} combine to prove the second statement. 
\endproof

The above result allows for the following explicit bundle description of the group of inner automorphisms of $C(M_\theta)$. 
Note that $M/\bT^2$ is simply connected when $M$ is, {\it cf.} \cite[Corollary 6.3]{Bre72}. 
\begin{corl} 
\label{corl:inner-theta}
If $M/\bT^2$ is simply connected, then 
$$
\Inn(C(M_\theta)) \cong \Gamma(M/\bT^2 ,\G\fB^{M_\theta}).
$$
\end{corl}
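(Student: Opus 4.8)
The plan is to deduce this corollary directly from Proposition~\ref{prop:gauge-theta} combined with the general observation made just after Theorem~\ref{thm:lift} concerning $\Inn(A)$ when $Z(A)=A_J$. The key point is that the hypotheses line up: we need $M/\bT^2$ simply connected (which is exactly the standing assumption), we need the existence of a covering subbundle $\tilde{\G\B^{M_\theta}}\subset \U\B^{M_\theta}$ of $\G\B^{M_\theta}$ (established in the first part of Proposition~\ref{prop:gauge-theta} via the phase map $\tilde\varphi$ built from the unique tracial state on $A_\theta$), and we need $Z(C(M_\theta))=C(M_\theta)_J$ (which is Proposition~\ref{prop:center-Mtheta}, valid precisely in the irrational case).

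First I would recall that by Proposition~\ref{prop:center-Mtheta} we have $C(M_\theta)_J = Z(C(M_\theta))$, so Corollary~\ref{corl:gauge-ncg} applies and gives $\cG(C(M_\theta),\H;J)\cong \Inn(C(M_\theta))$; more precisely, the surjection $\cG(A,\H;J)\to\Inn(A)$ of Proposition~\ref{prop:ses-gauge} is an isomorphism because its potential ambiguity is exactly $\U(Z(A))/\U(A_J)$, which is trivial here. Second, I would invoke Proposition~\ref{prop:gauge-theta}: since $M/\bT^2$ is simply connected and the covering subbundle exists, Theorem~\ref{thm:lift} yields $\cG(C(M_\theta),\H;J)\cong \Gamma(M/\bT^2,\G\B^{M_\theta})$. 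Composing these two isomorphisms gives $\Inn(C(M_\theta))\cong \Gamma(M/\bT^2,\G\fB^{M_\theta})$, which is the claim. I would also note that $M$ simply connected implies $M/\bT^2$ simply connected by \cite[Corollary~6.3]{Bre72}, so the hypothesis can be stated either way.

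This is essentially a bookkeeping argument, so there is no serious obstacle; the only thing that requires a moment's care is checking that the isomorphism $\cG\cong\Inn$ is compatible with the bundle-section picture, i.e.\ that under $\cG(C(M_\theta),\H;J)\cong\Gamma(M/\bT^2,\G\B^{M_\theta})$ the action on $C(M_\theta)\cong\Gamma(M/\bT^2,\fB^{M_\theta})$ really is the fiberwise inner action, so that it coincides with the standard realization of $\Inn$. But this is exactly the content of Corollary~\ref{corl:gauge}, which holds under the same hypotheses as Theorem~\ref{thm:lift}, so it is already available. Hence the proof amounts to: apply Proposition~\ref{prop:center-Mtheta}, then Corollary~\ref{corl:gauge-ncg}, then Proposition~\ref{prop:gauge-theta} (via Theorem~\ref{thm:lift}), and chain the isomorphisms.
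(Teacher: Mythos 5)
Your proposal is correct and follows essentially the same route as the paper's proof, which likewise chains the identification $C(M_\theta)_J=Z(C(M_\theta))$ (the paper cites Proposition~\ref{prop:center-theta} here, though in the irrational setting Proposition~\ref{prop:center-Mtheta} is the relevant statement, as you note), Corollary~\ref{corl:gauge-ncg}, and Proposition~\ref{prop:gauge-theta}. Your additional remark that the resulting isomorphism is compatible with the fiberwise inner action via Corollary~\ref{corl:gauge} is a correct but inessential elaboration.
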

\proof
In Proposition \ref{prop:center-theta} we have already established that $Z(C(M_\theta)) \cong C(M_\theta)_J$. Hence Corollary \ref{corl:gauge-ncg} applies and gives the group isomorphism $\Inn(C(M_\theta)) \cong \G(C(M_\theta),\H;J)$. Combining this with Proposition \ref{prop:gauge-theta} yields the desired result.
\endproof

\subsubsection{The toric noncommutative 3-sphere}
\label{subsubsect:S3}
Let us consider an explicit example of the above construction of toric noncommutative manifolds, namely a deformation of the 3-sphere, originally appearing in the $C^*$-context in \cite{Mat91a,Mat91}. Its Riemannian spin geometry was later explored in \cite{CL01}. See also \cite{CD02} for a more general family of 3-spheres. 


Consider the three-sphere $\bS^3 = \{ (a,b) \in \C^2: | a|^2 + | b|^2=1\}$ and parametrize by toroidal coordinates:
\begin{gather*}
a = e^{ i t_1} \cos \chi; \qquad 
b = e^{ i t_2} \sin \chi
\end{gather*}
where $0 \leq t_i \leq 2 \pi$ parametrize a 2-torus and $0 \leq \chi \leq \pi/2$. There is thus a natural action of $\bT^2$.

\begin{prop}
The $C^*$-algebra $C(\bS^3_\theta) :=C(\bS^3, A_\theta)^{\bT^2}$ (see Definition \ref{defn:toric-def}) is isomorphic to the $C^*$-algebra generated by $\alpha,\alpha^*$ and $\beta,\beta^*$ subject to the following conditions:
\begin{gather*}
\alpha \alpha^* = \alpha^* \alpha,\qquad \beta \beta^* = \beta^* \beta,\qquad
 \beta\alpha = e^{2\pi i \theta} \alpha \beta ,\qquad  \alpha \alpha^* + \beta \beta^* = 1.
\end{gather*}
\end{prop}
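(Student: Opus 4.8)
The plan is to produce a $*$-homomorphism $\phi$ from the universal $C^*$-algebra $\mathcal{C}$ generated by $\alpha,\beta$ subject to the four displayed relations onto $C(\bS^3_\theta)$, and then to prove that $\phi$ is injective. Note first that $\mathcal{C}$ is well-defined: the relation $\alpha\alpha^*+\beta\beta^*=1$ forces $\|\alpha\|^2=\|\alpha\alpha^*\|\le 1$ and likewise $\|\beta\|\le 1$, so the relations are admissible. (Conceptually, at $\theta=0$ the relations say $\alpha,\beta$ are commuting normals with $\alpha\alpha^*+\beta\beta^*=1$, so $\mathcal{C}$ then has character space $\bS^3$ and $\mathcal{C}$ in general is the Rieffel deformation of $C(\bS^3)$ along the torus action weighting $\alpha,\beta$ by $(1,0),(0,1)$; but I would give a direct argument.) To build $\phi$, I would set $\alpha(x):=a(x)U_1$ and $\beta(x):=b(x)U_2$ (products in $A_\theta$), where $a,b$ denote the coordinate functions on $\bS^3$. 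Using $\sigma_t(U_i)=e^{it_i}U_i$ and the fact that $a(t\cdot x)=e^{it_1}a(x)$, $b(t\cdot x)=e^{it_2}b(x)$ under the torus action, one checks $\alpha,\beta\in C(\bS^3,A_\theta)^{\bT^2}=C(\bS^3_\theta)$, and the four relations hold by pointwise computation in $A_\theta$: $\alpha\alpha^*=\alpha^*\alpha=|a|^2$, $\beta\beta^*=\beta^*\beta=|b|^2$, $\alpha\alpha^*+\beta\beta^*=|a|^2+|b|^2=1$, and $\beta\alpha=ab\,U_2U_1=e^{2\pi i\theta}ab\,U_1U_2=e^{2\pi i\theta}\alpha\beta$. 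By the universal property this defines $\phi\colon\mathcal{C}\to C(\bS^3_\theta)$.

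For surjectivity I would introduce the residual torus action $\tilde\sigma$ on $C(\bS^3_\theta)$, $(\tilde\sigma_s f)(x)=\sigma_s(f(x))$, under which $\alpha$ and $\beta$ have weights $(1,0)$ and $(0,1)$; the same weights define a torus action on $\mathcal{C}$, and $\phi$ intertwines the two. As a $C^*$-algebra with a torus action is the closed span of its spectral subspaces, it suffices to reach each spectral subspace of $C(\bS^3_\theta)$. Expanding along the generators $U_1,U_2$ of $A_\theta$, one identifies the weight-$m$ subspace ($m=(m_1,m_2)\in\Z^2$) with $\{g\,U_1^{m_1}U_2^{m_2}\}$, where $g$ runs over the weight-$m$ spectral subspace of $C(\bS^3)$; polynomials of weight $m$ are dense there, and using $|b|^2=1-|a|^2$ each weight-$m$ polynomial on $\bS^3$ has the form $a^{m_1^+}\bar a^{m_1^-}b^{m_2^+}\bar b^{m_2^-}q(|a|^2)$ with $q$ a polynomial and $m_i=m_i^+-m_i^-$, one of $m_i^\pm$ being zero. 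In view of $\alpha=aU_1$, $\alpha^*=\bar aU_1^{-1}$ and likewise for $\beta$, this element equals $\gamma_1^{|m_1|}\gamma_2^{|m_2|}q(\alpha\alpha^*)$ with $\gamma_1\in\{\alpha,\alpha^*\}$ and $\gamma_2\in\{\beta,\beta^*\}$, hence lies in the image of $\phi$. Therefore $\phi$ is onto.

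For injectivity, $\ker\phi$ is a torus-invariant closed ideal, hence the closed span of its spectral components $\ker\phi\cap\mathcal{C}_m$; since $x\in\ker\phi\cap\mathcal{C}_m$ gives $x^*x\in\ker\phi\cap\mathcal{C}_0$, it is enough to show $\phi$ is injective on the fixed-point algebra $\mathcal{C}_0$. The key input here is the Fuglede--Putnam theorem: since $\alpha$ is normal and $\beta\alpha=e^{2\pi i\theta}\alpha\beta$, it follows that $\beta\alpha^*=e^{-2\pi i\theta}\alpha^*\beta$, and together with the normality of $\beta$ this makes $p:=\alpha\alpha^*$ central in $\mathcal{C}$ and allows every weight-zero word in $\alpha,\beta,\alpha^*,\beta^*$ to be rewritten as a scalar multiple of $p^j(1-p)^k$; consequently $\mathcal{C}_0=C^*(p)\cong C(\sigma(p))$. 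Finally $\sigma(p)=[0,1]$: for each $\lambda\in[0,1]$, taking $\alpha=\sqrt{\lambda}\,U_1$ and $\beta=\sqrt{1-\lambda}\,U_2$ in any representation of $A_\theta$ gives a representation of $\mathcal{C}$ in which $p$ acts as $\lambda$. Since $\phi(p)=|a|^2$ has the same spectrum $[0,1]$, the restriction $\phi|_{\mathcal{C}_0}\colon C([0,1])\to C([0,1])$ is an isomorphism, in particular injective; combined with surjectivity, $\phi$ is the desired isomorphism.

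The main obstacle I anticipate is the injectivity step. Reducing to the fixed-point algebra via the grading is routine, but identifying $\mathcal{C}_0$ with $C^*(\alpha\alpha^*)$ genuinely relies on Fuglede--Putnam, precisely because the four relations as stated do not explicitly contain the mixed relation between $\alpha$ and $\beta^*$; and one must still verify that $\sigma(\alpha\alpha^*)$ is all of $[0,1]$ rather than a proper closed subset, for which the existence of representations of $A_\theta$, together with the scalar representations at the endpoints $\lambda=0,1$, is used. By comparison, surjectivity is a standard spectral-subspace computation once the residual torus action has been set up.
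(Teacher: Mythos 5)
Your proof is correct, and it is considerably more complete than the one in the paper. The paper's argument consists only of the ``easy half'' of yours: it exhibits the generators $\alpha=au_1$, $\beta=bu_2$ of $C(\bS^3,A_\theta)^{\bT^2}$, verifies the four relations by pointwise computation, and asserts generation without further comment; it never addresses why the \emph{universal} $C^*$-algebra on these relations has no larger quotient, i.e.\ the injectivity of the canonical surjection is not discussed at all. Your treatment supplies both missing pieces. For surjectivity, your identification of the weight-$m$ spectral subspace of the residual $\bT^2$-action with $\{g\,U_1^{m_1}U_2^{m_2}\}$, $g$ a weight-$m$ function on $\bS^3$, together with the reduction of weight-$m$ polynomials to $a^{m_1^+}\bar a^{m_1^-}b^{m_2^+}\bar b^{m_2^-}q(|a|^2)$, is a clean justification of the generation claim the paper takes for granted. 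For injectivity, the reduction to the fixed-point algebra via $x\mapsto x^*x$, the use of Fuglede--Putnam to derive $\beta\alpha^*=e^{-2\pi i\theta}\alpha^*\beta$ (which is genuinely needed, since the stated relations omit the mixed $\alpha$--$\beta^*$ relation) and hence the centrality of $p=\alpha\alpha^*$, the identification $\mathcal{C}_0=C^*(1,p)\cong C(\sigma(p))$, and the verification that $\sigma(p)=[0,1]$ by exhibiting the representations $\alpha\mapsto\sqrt{\lambda}\,U_1$, $\beta\mapsto\sqrt{1-\lambda}\,U_2$, are all sound. In short: same starting point as the paper, but your version actually proves the isomorphism as stated, whereas the paper's proof only establishes that $C(\bS^3_\theta)$ is \emph{a} quotient of the universal algebra satisfying the relations.
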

\proof
Note that $C(\bS^3,A_\theta)^{\bT^2}$ is generated by $a u_1$ and $b u_2$, with $a,b$ the generating functions of $C(\bS^3)$. One computes that the commutation relations are satisfied for $\alpha=a u_1$ and $\beta=bu_2$, {\it e.g.}
$$
(bu_2)(au_1) =  ab u_2 u_1 = e^{2\pi i\theta} ab u_1 u_2 = e^{2\pi i \theta} (au_1)(bu_2).
$$
\endproof

Suppose that $\theta$ is irrational. We determine the base space $C(\bS^3_\theta)_J \cong C(\bS^3)^{\bT^2}$. In terms of the toroidal parametrization of $\bS^3$ we readily find that the $\bT^2$-invariant subalgebra is given by functions depending on $|\alpha|$ and $|\beta|$. In other words,
$$
C(\bS^3_\theta)_J \cong C[0,\pi/2]
$$
corresponding to the angle $0\leq \chi \leq \pi/2$.



We also explicitly determine the fibers of the $C^*$-bundle $\fB^{\bS^3_\theta} \to [0,\pi/2]$ for which $C(\bS^3_\theta) \cong \Gamma([0,\pi/2],\fB^{\bS^3_\theta})$. At a point $\chi \in [0,\pi/2]$ we have fiber
$$
\fB^{\bS^3_\theta}_\chi \cong C^*\langle u_1 \cos \chi, u_2 \sin \chi \rangle,
$$
as the $C^*$-algebra generated by $u_1 \cos \chi$ and $u_2 \sin \chi$ in terms of unitaries $u_1, u_2$ that satisfy the defining relation $u_2 u_1 = e^{2\pi i \theta} u_1 u_2$. In other words, if $\chi \in (0,\pi/2)$, the $C^*$-algebra $\fB^{\bS^3_\theta}$ is isomorphic to the noncommutative torus $A_\theta$. However, at the endpoints we have that
\begin{align*}
\fB^{\bS^3_\theta}_{\chi=0}& \cong C^*\langle u_1\rangle \cong C(\bS^1),
\intertext{and}
\fB^{\bS^3_\theta}_{\chi=\pi/2}& \cong C^*\langle u_2\rangle \cong C(\bS^1).
\end{align*} 
This can also be seen from Theorem \ref{thm:toric-bundle}. Indeed, for $\chi \in (0,\pi/2)$ the torus action is free so that the fiber is $C(\bT^2,A_\theta)^{\bT^2} \cong A_\theta$. However, the endpoints $\chi=0,\pi/2$ have isotropy groups $\bT \subset \bT^2$ so that the fiber becomes $C(\bT,A_\theta)^{\bT^2} \cong C(\bS^1)$.

This bundle picture gains in perspective when looking at the inflation and deflation of the two circle directions in the toroidal parametrization of $\bS^3$. Namely, for the points in $\bS^3$ corresponding to $\chi \in (0,\pi/2)$ the deformation amounts to replacing the $\bT^2$-orbits by noncommutative tori. This gives as fibers over these points the algebra $A_\theta$.  
However, as $\chi$ approaches $0$ or $\pi/2$, one of the circle directions in the $\bT^2$-orbits becomes increasingly smaller, whereas the other becomes increasingly larger. Eventually, at the points $\chi=0,\pi/2$ one of the circles has shrunk to a point, giving the $C^*$-algebra $C(\bS^1)$ as the fiber over these two points (see Figure \ref{fig:bundle-S3theta}). 

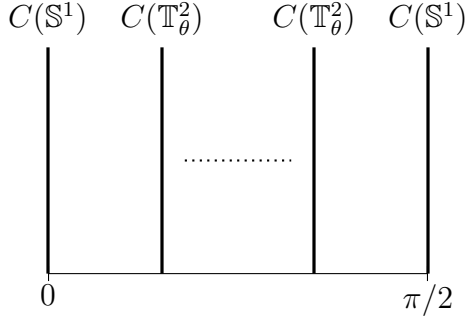
\begin{figure}
\begin{tikzpicture}
\draw[|-|] (0,0) node[anchor=north] {$0$} -- (5,0) node[anchor=north]{$\pi/2$};
\draw[very thick] (0,0) -- (0,3) node[anchor=south]{$C(\bS^1)$};
\draw[very thick] (1.5,0) -- (1.5,3) node[anchor=south]{$C(\bT^2_\theta)$};
\draw[very thick] (3.5,0) -- (3.5,3) node[anchor=south]{$C(\bT^2_\theta)$};
\draw[thick,dotted] (1.8,1.5)--(3.2,1.5);
\draw[very thick] (5,0) -- (5,3) node[anchor=south]{$C(\bS^1)$};
\end{tikzpicture}
\caption{The $C^*$-bundle on $[0,\pi/2]$ with section algebra isomorphic to $C(\bS^3_\theta)$.}
\label{fig:bundle-S3theta}
\end{figure}

Finally, since $[0,\pi/2]$ is simply connected, Proposition \ref{prop:gauge-theta} implies that the gauge group $\G(C(\bS^3_\theta), \H;J)$ is isomorphic to the space of continuous sections of a group bundle $\G\B^{\bS^3_\theta}$. Let us explicitly determine its fibers. At the end-points we have 
\begin{gather*}
\G\fB^{\bS^3_\theta}_{\chi=0} \cong \frac{\U(C(\bS^1))}{U(1)}, \qquad \G\fB^{\bS^3_\theta}_{\chi=\pi/2} \cong\frac{\U(C(\bS^1))}{U(1)} ,
\intertext{whereas for $\chi \in (0,\pi/2)$ we have}
\G\fB^{\bS^3_\theta}_{\chi} \cong \frac{\U(A_\theta)}{U(1)}.
\end{gather*}
Note that since $A_\theta$ is a simple $C^*$-algebra, it follows that the fibers $\G\fB^{\bS^3_\theta}_{\chi}$ for $\chi \in (0,\pi/2)$ are isomorphic to the group $\Inn(A_\theta)$ of inner automorphisms of $A_\theta$. If we combine this with Corollary \ref{corl:inner-theta} we finally arrive at an explicit group bundle description of the inner automorphisms of $C(\bS^3_\theta)$.

\begin{rem}
In \cite{BMS13} a factorization of the noncommutative 3-sphere $\bS^3_\theta$ was obtained in the context of unbounded KK-theory. Instead of the base space $[0,\pi/2]$ that we derived above from the real structure on a spectral triple, there we described a factorization over a base space $\bS^2$ with fibers $C(\bS^1)$. The structure of a noncommutative principal Hopf fibration $\bS^3_\theta \to \bS^2$ allowed for a metric factorization, that is to say, for a factorization of the (round) Dirac operator on $\bS^3_\theta$ in terms of the (round) Dirac operator on $\bS^2$ and a (vertical) $U(1)$-generator. Here, we go one step further as far as the topology is concerned, essentially considering also $\bS^2$ as a bundle over $[0,\pi/2]$. Since $[0,\pi/2]$ is contractible, its $K_1$-group is trivial so that one does not expect a factorization that is similar to the one appearing in \cite{BMS13}.
\end{rem}

\subsubsection{The toric noncommutative 4-sphere}

We end this section with the example of a noncommutative 4-sphere, whose differential geometric structure was first studied in \cite{CL01}. It formed a key example in the description of gauge theories on noncommutative manifolds ({\it cf.} \cite{LS04,LS06,BS10,BLS11}).

The noncommutative 4-sphere $\bS^4_\theta$ fits in the general class of toric noncommutative manifolds as defined above. Moreover, it is a suspension of $\bS^3_\theta$, paralleling the classical construction. 

Indeed, in addition to the complex coordinates $a$ and $b$ we parametrize $\bS^4$ by a real coordinate $x \in [-1,1]$:
\begin{gather*}
a = e^{ i t_1} \cos \chi \cos \psi; \qquad 
b = e^{ i t_2} \sin \chi \cos \psi; \qquad
x = \sin \psi
\end{gather*}
where $0 \leq t_i \leq 2 \pi$ parametrize a 2-torus as before, $0 \leq \chi \leq \pi/2$ and $-\pi/2\leq \psi \leq \pi/2$.

\begin{prop}
The $C^*$-algebra $C(\bS^4_\theta) :=C(\bS^4, A_\theta)^{\bT^2}$ (see Definition \ref{defn:toric-def}) is isomorphic to the $C^*$-algebra generated by $\alpha,\alpha^*,\beta,\beta^*$ and a central self-adjoint element $x=x^*$, subject to the following conditions:
\begin{gather*}
\alpha \alpha^* = \alpha^* \alpha,\qquad \beta \beta^* = \beta^* \beta,\qquad
 \beta\alpha = e^{2\pi i \theta} \alpha \beta ,\qquad  \alpha \alpha^* + \beta \beta^* +x^2= 1.
\end{gather*}
\end{prop}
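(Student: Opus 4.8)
The plan is to establish the isomorphism by the same strategy used for $\bS^3_\theta$: realize $C(\bS^4_\theta)$ as $\bT^2$-invariant functions on $\bS^4$ with values in $A_\theta$, identify a natural set of generators, and verify the displayed relations.

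\medskip

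First I would recall that $C(\bS^4)$ is generated by the coordinate functions $a,b$ (and their adjoints) together with the real coordinate $x$, subject only to $a^*a = aa^*$, $b^*b=bb^*$, $ab=ba$, $x=x^*$ central, and $aa^*+bb^*+x^2=1$. I would then observe, exactly as in the $\bS^3_\theta$ case, that the fixed-point algebra $C(\bS^4,A_\theta)^{\bT^2}$ is generated by the invariant elements $\alpha := a\,u_1$, $\beta := b\,u_2$ and $x$ (viewed as the constant $A_\theta$-valued function $x\otimes 1$, which is manifestly $\bT^2$-invariant and central since $x$ lies in the commutative part and $1$ is central in $A_\theta$). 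Invariance of $\alpha$ and $\beta$ is checked from $\sigma_t(u_1)=e^{it_1}u_1$ cancelling the phase picked up by $a(t\cdot p)=e^{it_1}a(p)$, and similarly for $\beta$.

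\medskip

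Next I would verify the four relations. The commutativity $\alpha\alpha^*=\alpha^*\alpha$ follows since $a,a^*$ commute among themselves and $u_1,u_1^*$ do as well, and the two commute with each other; likewise for $\beta$. The twisted commutation $\beta\alpha = e^{2\pi i\theta}\alpha\beta$ is the one computation worth writing: $(bu_2)(au_1)=ab\,u_2u_1 = e^{2\pi i\theta}ab\,u_1u_2 = e^{2\pi i\theta}(au_1)(bu_2)$, using that $a,b$ commute in $C(\bS^4)$ and the defining relation of $A_\theta$. Centrality and self-adjointness of $x$ are immediate, and the sphere relation $\alpha\alpha^*+\beta\beta^*+x^2 = (aa^*)(u_1u_1^*)+(bb^*)(u_2u_2^*)+x^2 = aa^*+bb^*+x^2 = 1$ follows from unitarity of $u_1,u_2$ and the relation on $\bS^4$. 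This exhibits a surjective $*$-homomorphism from the universal $C^*$-algebra $\mathcal{C}$ presented by the listed relations onto $C(\bS^4_\theta)$.

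\medskip

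The main obstacle is injectivity, i.e.\ showing no further relations hold among $\alpha,\beta,x$. I would handle this by a Rieffel-deformation argument rather than bare hands: the undeformed algebra $C(\bS^4)$ is the universal algebra on commuting $a,b,x$ with the sphere relation, it carries the $\bT^2$-action $\sigma$, and $\bS^4_\theta$ is by definition the Rieffel deformation $C(\bS^4)_\theta$ of $C(\bS^4)$ along this action. General deformation theory (as in \cite{Rie93a,CL01}) gives a vector-space identification of $C(\bS^4)_\theta$ with $C(\bS^4)$ preserving the isotypical decomposition under $\bT^2$, and one checks that the generators $\alpha,\beta,x$ correspond to $a,b,x$ under this identification with products twisted by the cocycle $e^{2\pi i\theta(\cdot)}$; since $\mathcal{C}$ is by construction the $\theta$-twist of the presentation of $C(\bS^4)$, dimension-counting on each isotypical component forces the surjection $\mathcal{C}\to C(\bS^4_\theta)$ to be an isomorphism. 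Alternatively, and more in the spirit of the suspension remark preceding the proposition, one can note that $\mathcal{C}$ is visibly the unital $C^*$-algebra obtained from $C(\bS^3_\theta)$ by adjoining a central self-adjoint contraction $x$ and replacing $\alpha\alpha^*+\beta\beta^*=1$ by $\alpha\alpha^*+\beta\beta^*+x^2=1$ --- i.e.\ $\mathcal{C}$ is the (unreduced) suspension $C^*$-algebra of $C(\bS^3_\theta)$ in the parameter $x$ --- and then invoke the already-established $\bS^3_\theta$ presentation together with the fact that Rieffel deformation commutes with this suspension construction, since the adjoined coordinate $x$ is $\bT^2$-fixed.
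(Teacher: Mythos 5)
Your proposal is correct and follows essentially the same route as the paper, which in fact omits the proof for $\bS^4_\theta$ altogether and relies on the identical computation given for $\bS^3_\theta$ (identify the invariant generators $\alpha=au_1$, $\beta=bu_2$, $x$ and verify the twisted relations), together with the observation immediately after the statement that $C(\bS^4_\theta)$ is a suspension of $C(\bS^3_\theta)$. Your additional discussion of injectivity of the map from the universal $C^*$-algebra --- via the isotypical decomposition under the $\bT^2$-action and the compatibility of Rieffel deformation with adjoining the $\bT^2$-fixed coordinate $x$ --- supplies a step the paper leaves implicit, and is the standard and correct way to close that gap.
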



This makes $C(\bS^4_\theta)$ a suspension of $C(\bS^3_\theta)$, so that the $C^*$-bundle description of $\bS^4_\theta$ follows directly from that of $\bS^3_\theta$. The algebra $C(\bS^4_\theta)_J$ is isomorphic to the $\bT^2$-invariant subalgebra, which in this case is the $C^*$-algebra generated by $|\alpha|$, $|\beta|$ and $x$. Hence, the base space is the region in $\R^3$ with coordinates $r, s$ (both positive, corresponding to $|\alpha|$ and $|\beta|$) and $x \in \R$. That is, the base space $X$ is given by 
\begin{equation}
\label{eq:base-S4theta}
X= \{ (r,s,x)  \in \R^3: r^2+s^2+x^2=1 , \quad r,s\geq 0 \}.
\end{equation}
\begin{figure}
\begin{tikzpicture}
\pgftext[center,at={\pgfpoint{0}{0}}]{\includegraphics[scale=.4]{./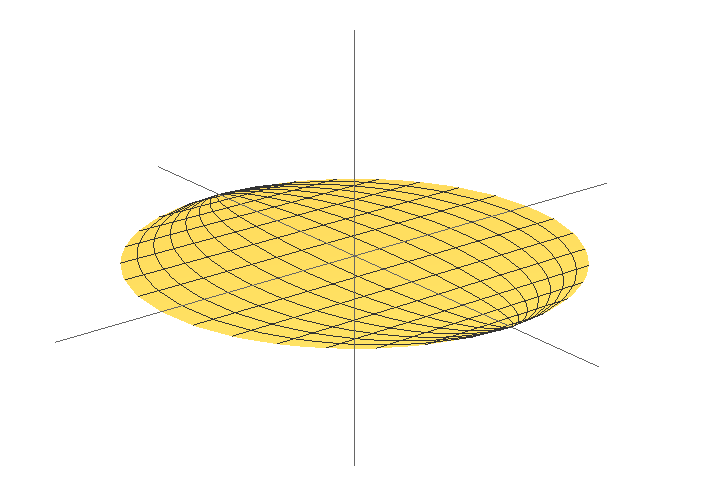}}
\node at (3.5,-1.3) {$x$};

\node at (-4,-1.5) {$r-s$};
\draw[very thick] (-2,.8) -- (-2,1.8) node[anchor=south]{$\C$};
\draw[very thick] (2.1,-1.05) -- (2.1,-0.05) node[anchor=south]{$\C$};
\draw[very thick] (-1.85,-1.2) -- (-1.85,-.2) node[anchor=south]{$C(\bS^1)$};
\draw[very thick] (-3.4,-.15) -- (-3.4,.85) node[anchor=south]{$C(\bS^1)$};
\draw[very thick] (2.7,.45) -- (2.7,1.45) node[anchor=south]{$C(\bS^1)$};
\draw[very thick] (.2,1) -- (.2,2) node[anchor=south]{$C(\bS^1)$};
\draw[very thick] (-.9,.3) -- (-.9,1.3) node[anchor=south]{$C(\bT^2_\theta)$};
\draw[very thick] (1.5,.2) -- (1.5,1.2)  node[anchor=south]{$C(\bT^2_\theta)$};
\draw[very thick] (.65,-1) -- (.65,0) node[anchor=south]{$C(\bT^2_\theta)$};
\end{tikzpicture}
\caption{The $C^*$-bundle on base space $X$ of Equation \eqref{eq:base-S4theta} with section algebra isomorphic to $C(\bS^4_\theta)$.}
\label{fig:bundle-S4theta}
\end{figure}
This is indeed a (smooth) suspension of the interval $[0,\pi/2]$. In fact, $X$ can be parametrized by 
$$
r = \cos\chi\cos\psi, \qquad s = \sin \chi \cos \psi, \qquad x= \sin \psi.
$$

For $|\psi| < \pi/2$ the fiber $\fB^{\bS^4_\theta}_{\chi,\psi}$ is isomorphic to the fiber $\fB^{\bS^4_\theta}_{\chi}$ that we derived for $\bS^3_\theta$ on the interval $[0,\pi/2]$. However, for $\psi = \pm \pi/2$ the entire fiber is reduced to the trivial $C^*$-algebra, {\it i.e.} $\fB^{\bS^4_\theta}_{\chi,\psi=\pm\pi/2} \cong \C$. The resulting $C^*$-bundle $\fB^{\bS^4_\theta} \to X$ satisfies 
$$
\Gamma(X,\fB^{\bS^4_\theta})\cong C(\bS^4_\theta)
$$
and is depicted schematically in Figure \ref{fig:bundle-S4theta}.

Since $X$ is simply connected we can apply Proposition \ref{prop:gauge-theta} to conclude that the gauge group $\G(C(\bS^4_\theta), \H;J)$ is isomorphic to $\Gamma(X,\G\B^{\bS^4_\theta})$. The fibers of the group bundle $\G\B^{\bS^4_\theta}$ are given by the trivial groups for $\psi=\pm \pi/2$, by $\U(C(\bS^1))/U(1)$ if $|\psi|< \pi/2$ and $\chi = 0$ or $\pi/2$, and by $\Inn(A_\theta)$ if $|\psi|< \pi/2$ and $\chi \in (0,\pi/2)$. Again, when combined with Corollary \ref{corl:inner-theta}, this gives an explicit group bundle description of the group of inner automorphisms of $C(\bS^4_\theta)$.

\section{Outlook}

Besides the examples discussed in this paper, there are many spectral triples in the literature for which our bundle picture could give a handle on the corresponding generalized gauge theory. Moreover, this could lead to an explicit group bundle description of the group of inner automorphisms of the pertinent $C^*$-algebra.

An interesting class of examples is given by real spectral triples on quantum groups or quantum homogeneous spaces, such as \cite{DS03,DLPS04,DLSSV04,SDLSV05,DALW06}. Another natural class of examples are (real spectral triples on) continuous trace $C^*$-algebras ({\it cf.} \cite{RW98} for a definition), especially in the case of non-vanishing Dixmier--Douady class. The latter condition would bring us beyond the continuous trace $C^*$-algebras that are Morita equivalent to $C(X)$, which (in the finite-rank case) reduces to the gauge theory described in Section \ref{subsect:ym}.

More generally, one could study the gauge theories coming from  {\em KK-fibrations} that were introduced in \cite{ENO09}: a $C^*$-bundle $\fB \to X$ is defined to be a $KK$-fibration if for any compact contractible space $\Delta$, any continuous map $f: \Delta \to X$ and any $z \in \Delta$, the evaluation map $\ev_z:\Gamma(\Delta,f^*\fB) \to \fB_{f(z)}$ is a KK-equivalence. In particular, this gives rise to a so-called {\em K-fibration}, in the sense that we have for the corresponding K-groups:
$$
K_\ast(\Gamma(\Delta,f^*\fB)) \cong K_\ast (\fB_{f(z)}).
$$
The $C^*$-bundles constructed in Theorem \ref{thm:toric-bundle} to describe toric noncommutative manifolds $C(M_\theta)$ do not fall in this class. Indeed, if we consider the $C^*$-bundle corresponding to $\bS^3_\theta$ as discussed in Section \ref{subsubsect:S3}, we can take $\Delta=[0,\pi/2]=X$, $f$ the identity map and $z=0$. Then $\Gamma(\Delta,\bS^3_\theta) \cong C(\bS^3_\theta)$ and $\fB_0 \cong C(\bS^1)$ and these two $C^*$-algebras have different $K^1$-groups.

\newcommand{\noopsort}[1]{}\def\cprime{$'$}

\end{document}